\newcommand{\ie}{\emph{i.e.}}
\newcommand{\eg}{\emph{e.g.}}
\newcommand{\syn}[1]{\mathsf{#1}}
\newcommand{\var}[1]{\mathit{#1}}
\newcommand{\s}[1]{\mathit{#1}}
\newcommand{\parto}{\rightharpoonup}
\newcommand{\set}[1]{\left\{#1\right\}}
\newcommand{\setbuild}[2]{\left\{ #1 : #2\right\}}
\newcommand{\Pow}[1]{{\mathcal{P}\left(#1\right)}}
\newcommand{\Nats}{{\mathbb{N}}}
\newcommand{\vect}[1]{\langle #1\rangle}
\newcommand{\To}{\mathrel{\Rightarrow}}
\newcommand{\wt}{\sqsubseteq}
\newcommand{\join}{\sqcup}
\newcommand{\bigjoin}{\bigsqcup}
\newcommand{\sembr}[1]{\ensuremath{[\![{#1}]\!]}}
\newcommand{\opor}{\mathrel{|}}
\newcommand{\produces}{\mathrel{::=}}
\newcommand{\vv}{v}
\newcommand{\lam}{\ensuremath{\var{lam}}}
\newcommand{\lamterm}{$\lambda$-term}
\newcommand{\lc}{$\lambda$-calculus}
\newcommand{\call}{\ensuremath{\var{call}}}
\newcommand{\free}{\mathit{free}}
\newcommand{\ttfs}{\mbox{\tt .}}
\newcommand{\ttlp}{\mbox{\tt (}}
\newcommand{\ttrp}{\mbox{\tt )}}
\newcommand{\ttlc}{\mbox{\tt \{}}
\newcommand{\ttrc}{\mbox{\tt \}}}
\newcommand{\appform}[2]{\ttlp #1\; #2\ttrp}
\newcommand{\lamform}[2]{\ttlp \uplambda\;\ttlp#1\ttrp\;#2\ttrp}
\newcommand{\ttsc}{\mbox{\tt ;}}
\newcommand{\stmt}{s}
\newcommand{\lab}{\ell}
\newcommand{\expr}{e}
\newcommand{\Eval}{{\mathcal{E}}}
\newcommand{\State}{\Sigma}
\newcommand{\state}{\varsigma}
\newcommand{\Den}{D} 
\newcommand{\Obj}{Obj} 
\newcommand{\tf}{f}
\newcommand{\den}{d}
\newcommand{\obj}{o}
\newcommand{\store}{\sigma}
\newcommand{\env}{\rho}
\newcommand{\benv}{\beta}
\newcommand{\clo}{\var{clo}}
\newcommand{\cont}{\kappa}
\newcommand{\contptr}{\ptr{\cont}}
\newcommand{\ptr}[1]{{p\!\!^{^{#1}}}}
\newcommand{\alloc}{\mathit{alloc}}
\newcommand{\addr}{a}
\newcommand{\val}{\var{val}}
\newcommand{\tm}{t}
\newcommand{\tick}{{{tick}}}
\newcommand{\aTo}{\leadsto}
\newcommand{\sa}[1]{\widehat{\mathit{#1}}}
\newcommand{\aEval}{{\hat{\mathcal{E}}}}
\newcommand{\atf}{{\hat{f}}}
\newcommand{\aState}{{\hat{\Sigma}}}
\newcommand{\astate}{{\hat{\varsigma}}}
\newcommand{\aDen}{\hat{D}} 
\newcommand{\astore}{{\hat{\sigma}}}
\newcommand{\aenv}{{\hat{\rho}}}
\newcommand{\abenv}{{\hat{\beta}}}
\newcommand{\aclo}{{\widehat{\var{clo}}}}
\newcommand{\aobj}{{\hat \obj}}
\newcommand{\acont}{{\hat{\kappa}}}
\newcommand{\acontptr}{\ptr{\acont}}
\newcommand{\aden}{{\hat{d}}}
\newcommand{\aaddr}{{\hat{\addr}}}
\newcommand{\aalloc}{{\widehat{alloc}}}
\newcommand{\aval}{{\widehat{\var{val}}}}
\newcommand{\atick}{{\widehat{tick}}}
\newcommand{\atm}{{\hat t}}
\newcommand{\absmap}{\alpha}
\newcommand{\abs}[1]{|#1|}
\newcommand{\sstate}{\xi}
\newcommand{\SState}{\Xi}
\newcommand{\new}{{\mathit{new}}}
\newcommand{\anew}{{\widehat{\new}}}
\newcommand{\asstate}{\hat \xi}
\newcommand{\aSState}{\hat \Xi}
\newcommand{\nCFA}{$m$-CFA}
\newcommand{\kCFA}{$k$-CFA}
\newcommand{\setbuildsm}[2]{\{ #1 \mathrel{:} #2 \}}
\newtheorem{theorem}{Theorem}[section]
\newtheorem{lemma}[theorem]{Lemma}
\newcommand{\methodDef}{M}
\newcommand{\constDef}{K}
\newcommand{\classform}[3]{
  \mathtt{class}\; 
  #1\;
  \mathtt{extends}\;
  #2\;
  \ttlc
  #3
  \ttrc
}
\newcommand{\con}
\newcommand{\className}{C}
\newcommand{\fieldName}{f}
\newcommand{\methodName}{m}
\newcommand{\MethodLookup}{{\mathcal{M}}}
\newcommand{\FetchRuctor}{{\mathcal{C}}}
\newcommand{\Ructor}{{\mathcal{K}}}
\newcommand{\aMethodLookup}{{\hat {\mathcal{M}}}}
\newcommand{\aFetchRuctor}{{\hat {\mathcal{C}}}}
\newcommand{\aRuctor}{{\hat {\mathcal{K}}}}
\newcommand{\ssucc}{\mathit{succ}}
\begin{document}

\conferenceinfo{PLDI'10,} {June 5--10, 2010, Toronto, Ontario, Canada.}
\CopyrightYear{2010}
\copyrightdata{978-1-4503-0019/10/06}


\title{Resolving and Exploiting the \(\boldsymbol k\)-CFA Paradox}
\subtitle{Illuminating Functional vs. Object-Oriented Program Analysis}

\authorinfo{Matthew Might}{University of Utah}{might@cs.utah.edu}
\authorinfo{Yannis Smaragdakis}{University of Massachusetts}{yannis@cs.umass.edu}
\authorinfo{David Van Horn}%
{Northeastern University}{dvanhorn@ccs.neu.edu}

\maketitle

\begin{abstract}
  Low-level program analysis is a fundamental problem, taking the shape
of ``flow analysis'' in functional languages and ``points-to''
analysis in imperative and object-oriented languages.  Despite the
similarities, the vocabulary and results in the two communities remain
largely distinct, with limited cross-understanding.  One of the few
links is Shivers's \kCFA{} work, which has advanced the concept of
``context-sensitive analysis'' and is widely known in both
communities.

Recent results indicate that the relationship between the functional
and object-oriented incarnations of $k$-CFA is not as well understood
as thought.  Van Horn and Mairson proved \kCFA{} for $k \geq 1$ to be
EXPTIME-complete; hence, no polynomial-time algorithm can exist.  Yet,
there are several polynomial-time formulations of context-sensitive
points-to analyses in object-oriented languages.  Thus, it seems that
functional \kCFA{} may actually be a profoundly different analysis
from object-oriented \kCFA{}.  We resolve this paradox by showing that
the exact same specification of \kCFA{} is polynomial-time for
object-oriented languages yet exponential-time for functional ones:
objects and closures are subtly different, in a way that interacts
crucially with context-sensitivity and complexity.  This illumination
leads to an immediate payoff: by projecting the object-oriented
treatment of objects onto closures, we derive a polynomial-time
hierarchy of context-sensitive CFAs for functional programs.

\end{abstract}

\category{F.3.2}{Logics and Meanings of Programs}{Semantics of
  Programming Languages}[Program Analysis]

 \terms
 Algorithms, Languages, Theory

 \keywords
static analysis, control-flow analysis, pointer analysis, functional, object-oriented, k-CFA, m-CFA




\section{Introduction}

One of the most fundamental problems in program analysis is
determining the entities to which an expression may refer at
run-time.  In imperative and object-oriented (OO) languages, this is
commonly phrased as a \emph{points-to} (or \emph{pointer}) analysis:
to which objects can a variable point? In functional languages, the
problem is called \emph{flow analysis} \cite{dvanhorn:Midtgaard2011Controlflow}:
to which expressions can a value flow?

Both points-to and flow analysis acquire a degree of complexity for
higher-order languages: functional languages have
first-class functions and object-oriented languages have dynamic
dispatch; these features conspire to make call-target resolution
depend on the flow of values, even as the flow of values depends on
what targets are possible for a call.
That is, data-flow depends on control-flow, yet control-flow depends on
data-flow.
Appropriately, this problem is commonly called
\emph{control-flow analysis} (CFA).

Shivers's \kCFA{} \cite{mattmight:Shivers:1991:CFA} is a
well-known family of control-flow analysis algorithms, widely
recognized in both the functional and the object-oriented
world. \kCFA{} popularized the  idea of context-sensitive
flow analysis.\footnote{Although the \kCFA{} work is often used as a
  synonym for ``$k$-context-sensitive'' in the OO world, \kCFA{} is more
  correctly an algorithm that packages context-sensitivity together
  with several other design decisions. In the terminology of OO
  points-to analysis, \kCFA{} is a $k$-call-site-sensitive,
  field-sensitive points-to analysis algorithm with a
  context-sensitive heap and with on-the-fly call-graph
  construction. (\citet{Lhotak:2006:PAU} and \citet{1391987} are good
  references for the classification of points-to analysis algorithms.)
  In this paper we use the term ``\kCFA'' with this more precise meaning,
  as is common in the functional programming world, and not just as a
  synonym for ``$k$-context-sensitive''. Although this classification 
  is more precise, it still allows for a range of algorithms, as we
  discuss later.} Nevertheless, there have
always been annoying discrepancies between the experiences in the
application of \kCFA{} in the functional and the OO world. Shivers
himself notes in his ``Best of PLDI'' retrospective that ``the basic
analysis, for any $k > 0$ [is] intractably slow for large programs''
\cite{dvanhorn:shivers-sigplan04}.
This contradicts common experience in the OO setting, where a 1- and
2-CFA analysis is considered heavy but certainly possible
\cite{1391987,BS-OOPSLA09}. 

To make matters formally worse,
\citet{dvanhorn:VanHorn-Mairson:ICFP08} recently proved \kCFA{} for $k
\geq 1$ to be EXPTIME-complete, i.e., non-polynomial.
%
%
Yet the OO formulations of \kCFA{} have provably polynomial complexity
(e.g., \citet{BS-OOPSLA09} express the algorithm in Datalog, which is
a language that can only express polynomial-time algorithms). This
paradox seems hard to resolve. Is \kCFA{} misunderstood?  Has 
inaccuracy crept into the transition from functional to OO?

In this paper we resolve the paradox and illuminate the deep
differences between functional and OO context-sensitive program
analyses. We show that the exact same formulation of \kCFA{} is
exponential-time for functional programs yet polynomial-time for OO
programs. To ensure fidelity, our proof appeals directly to Shivers's
original definition of \kCFA{} and applies it to the most common
formal model of Java, Featherweight Java.

As might be expected, our finding hinges on the fundamental
difference between typical functional and OO languages: the former
create implicit closures when lambda expressions are created, while
the latter require the programmer to explicitly ``close'' (i.e., pass
to a constructor) the data that a newly created object can reference.
At an intuitive level, this difference also explains why the exact
same \kCFA{} analysis will not yield the same results if a functional
program is automatically rewritten into an OO program: the call-site
context-sensitivity of the analysis leads to loss of precision when
the values are explicitly copied---the analysis merges the information
for all paths with the same $k$-calling-context into the same entry
for the copied data.

Beyond its conceptual significance, our finding pays immediate
dividends: By emulating the behavior of OO \kCFA{}, we
derive a hierarchy, \nCFA{}, of polynomial CFA analyses for functional
programs. In technical terms, \kCFA{} corresponds to an abstract
interpretation over shared-environment closures, while \nCFA{} 
corresponds to an abstract interpretation over flat-environment
closures.  
\nCFA{} turns out to be an important instantiation in the space of
analyses described by
\citet{mattmight:Jagannathan:1995:Unified}.

\section{Background and Illustration}

Although we prove our claims formally in later sections, we first
illustrate the behavior of \kCFA{} for OO and functional
programs informally, so that the reader has an intuitive understanding of the
essence of our argument.

\subsection{Background: What is CFA?}
\kCFA{} was developed to solve the higher-order control-flow problem
in \lc-based programming languages.
%
%
Functional languages are explicitly vulnerable to the higher-order
control-flow problem, because closures are passed around as
first-class values.
Object-oriented languages like Java are implicitly higher-order,
because method invocation is resolved dynamically---the invoked method
depends on the type of the object that makes it to the invocation
point.
%

In practice, CFAs must compute much more than just control-flow
information.  CFAs are also data-flow analyses, computing the values
that flow to any program expression. In the object-oriented setting,
CFA is usually termed a ``points-to'' analysis and the interplay
between control- and data-flow is called ``on-the-fly call-graph
construction'' \cite{Lhotak:2006:PAU}.

Both the functional community and the pointer-analysis community have
assigned a meaning to the term \kCFA.
Informally, \kCFA{} refers to a hierarchy of global static analyses
whose context-sensitivity is a function of the last $k$ call sites 
visited.
In its functional formulation, \kCFA{} uses this context-sensitivity
for every value and variable---thus, in pointer analysis terms,
\kCFA{} is a $k$-call-site-sensitive analysis with a $k$-context-sensitive
heap.

\subsection{Insight and Example}

The paradox prompted by the Van Horn and Mairson proofs seems to imply
that \kCFA{} actually refers to two different analyses: one for
functional programs, and one for object-oriented/imperative programs.
The surprising finding of our work is that \kCFA{} means the same
thing for both programming paradigms, but that its behavior is different
for the object-oriented case.

\kCFA{} was defined by abstract interpretation of the \lc{} semantics for
an abstract domain collapsing data values to static abstractions
qualified by $k$ calling contexts. Functional implementations of the
algorithm are often heavily influenced by this abstract interpretation
approach.  The essence of the exponential complexity of \kCFA{} (for
$k \geq 1$) is that, although each variable can appear with at most
$O(n^{k})$ calling contexts, the number of variable environments is
exponential, because an environment can combine variables from
distinct calling contexts.  Consider the following term:
\begin{displaymath}
\lamform{z}{\appform{z}{x_1 \ldots x_n}}\enspace\text.
\end{displaymath}
This expression has $n$ free variables.  In 1-CFA, each variable is
mapped to the call-site in which it was bound.  By binding each of the
$x_i$ in multiple call-sites, we can induce an exponential number of
environments to close this $\lambda$-term:
\begin{displaymath}
\begin{array}{l}
\ttlp\lamform{f_1}{\appform{f_1}{0} \appform{f_1}{1}}\\
\mbox{\tt{\ }}\ttlp\uplambda\;\ttlp x_1\ttrp\\
\quad\cdots\\
\quad\ttlp\lamform{f_n}{\appform{f_n}{0} \appform{f_n}{1}}\\
\quad\mbox{\tt{\ }}\ttlp\uplambda\;\ttlp x_n\ttrp\; \\
\quad\mbox{\tt{\ \ }}\lamform{z}{\appform{z}{x_1 \ldots x_n}}\ttrp\ttrp\cdots\ttrp\ttrp\enspace\text.
\end{array}
\end{displaymath}
Notice that each $x_i$ is bound to $0$ and $1$, thus there are $2^n$
environments closing the inner $\lambda$-term.

The same behavior is not possible in the object-oriented setting
because creating closures has to be explicit (a fundamental difference
of the two paradigms\footnote{It is, of course, impossible to strictly
classify languages by paradigm (``what is JavaScript?'') so our
statements reflect typical, rather than universal, practice.}) and the
site of closure creation becomes the common calling context for all
closed variables.

\begin{figure*}[tbp]
\begin{center}
\includegraphics*[scale=0.64, viewport=40 130 760 565]{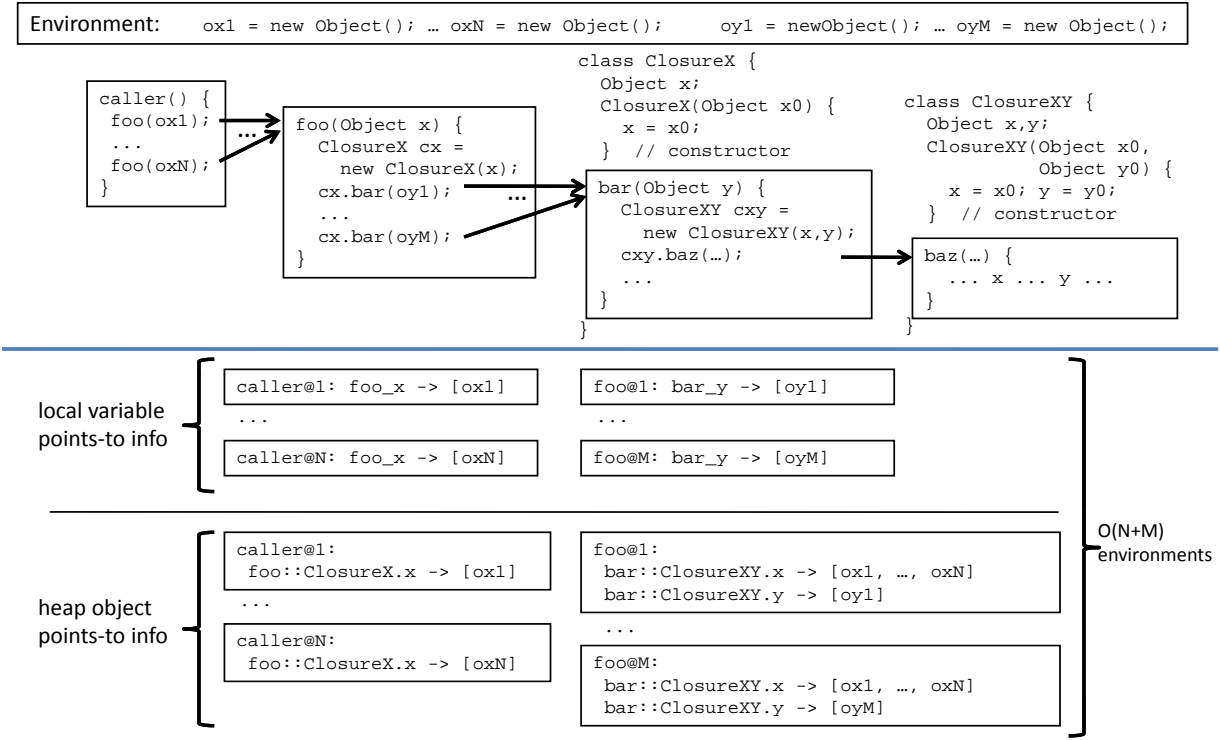}

\caption{An example OO program, analyzed under 1-CFA{}. Parts that are
 orthogonal to the analysis (e.g., return types, the class containing
 \texttt{foo}, the body of \texttt{baz}) are elided. The bottom part
 shows the (points-to) results of the analysis in the form
 ``\emph{context}: \emph{var} \texttt{->} \emph{abstractObject}''.
 Conventions: we use \texttt{[ox1]}, ..., \texttt{[oxN]},
 \texttt{[oy1]}, ..., \texttt{[oyM]} to mean the abstract objects
 pointed to by the corresponding environment variables. (We only care
 that these objects be distinct.)  \emph{method}\texttt{\_}\emph{var}
 names a local variable, \emph{var} inside a
 method. \emph{method}\texttt{::}\emph{Type}\texttt{.}\emph{field}
 refers to a field of the object of type \emph{Type} allocated inside
 \emph{method}.  (This example allocates a single object per method,
 so no numeric distinction of allocation sites is necessary.)
 \emph{callermethod}\texttt{@}\emph{num} designates the \emph{num}-th
 call-site inside method \emph{callermethod}.}
\label{fig:illustration-oo}
\end{center}
\end{figure*}

\begin{figure*}[tbp]
\begin{center}
\includegraphics*[scale=0.64, viewport=40 155 760 536]{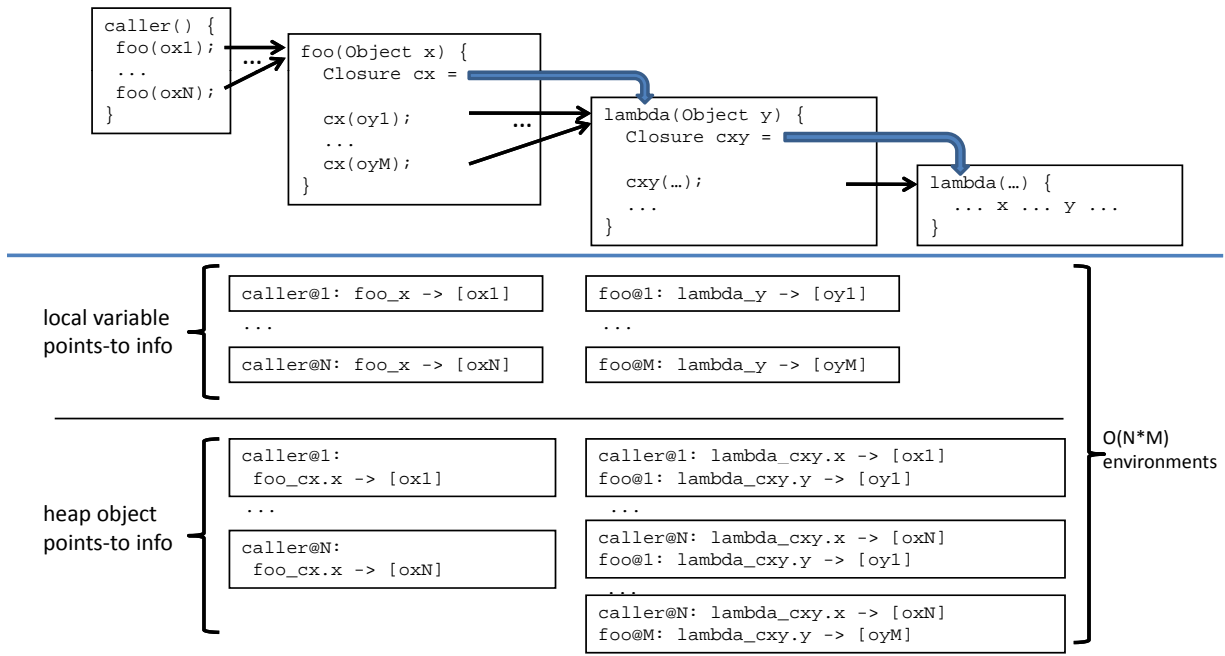}

\caption{The same program in functional form (implicit
closures). \emph{The lambda expressions are drawn outside their
lexical environment to illustrate the analogy with the OO code.}  The
number of environments out of the abstract interpretation is now
$O(NM)$ because variables \texttt{x} and \texttt{y} in the rightmost
lambda were not closed together and have different contexts.}
\label{fig:illustration-fun}
\end{center}
\end{figure*}

Figures~\ref{fig:illustration-oo} and \ref{fig:illustration-fun}
demonstrate this behavior for a 1-CFA analysis. (This is the shortest,
in terms of calling depth, example that can demonstrate the
difference.) Figure~\ref{fig:illustration-oo} presents the program in
OO form, with explicit closures---i.e., objects that are initialized
to capture the variables that need to be used
later. Figure~\ref{fig:illustration-fun} shows the same program in
functional form. We use a fictional (for Java) construct
\texttt{lambda} that creates a closure out of the current environment.
The bottom parts of both figures show the information that the
analysis computes. (We have grouped the information in a way that is
more reflective of OO \kCFA{} implementations, but this is just a
matter of presentation.)

The essential question is ``in how many environments does function
\texttt{baz} get analyzed?'' The exact same,
abstract-interpretation-based, 1-CFA{} algorithm produces $O(N+M)$
environments for the object-oriented program and $O(NM)$ environments
for the functional program. The reason has to do with how the
context-sensitivity of the analysis interacts with the explicit
closure. Since closures are explicit in the OO program, all
(heap-)accessible variables were closed simultaneously. One can see
this in terms of variables \texttt{x} and \texttt{y}: both are closed
by copying their values to the \texttt{x} and \texttt{y} fields of an
object in the expression ``\texttt{new ClosureXY(x,y)}''.  This
copying collapses all the different values for \texttt{x} that have
the same 1-call-site context. Put differently, \texttt{x} and
\texttt{y} inside the OO version of \texttt{baz} are not the original
variables but, rather, copies of them. The act of copying, however,
results in less precision because of the finite context-sensitivity of
the analysis. In contrast, the functional program makes implicit
closures in which the values of \texttt{x} and \texttt{y} are closed at
different times and maintain their original context. The abstract
interpretation results in computing all $O(NM)$ combinations of
environments with different contexts for \texttt{x} and
\texttt{y}. (If the example is extended to more levels, the number
of environments becomes exponential in the length of the program.)

The above observations immediately bring to mind a well-known result
in the compilation of functional languages: the choice between shared
environments and flat environments~\cite[page
  142]{dvanhorn:Appel1991Compiling}.
In a flat environment, the values of all free variables are copied
into a new environment upon allocation.
In a flat-environment scenario, it \emph{is} sufficient to know only
the base address of an environment to look up the value of a variable.
To define the meaning of a program, it clearly makes no difference which
environment representation a formal semantics models.
However, in \emph{compilation} there are trade-offs: shared
environments make closure-creation fast and variable look-up slow,
while flat environments make closure-creation slow and variable
look-up fast.
The choice of environment representation also makes a profound
difference during abstract interpretation.

\section{Shivers's original $\boldsymbol k$-CFA}
\label{sec:kcfa-cps}

Because one possible resolution to the paradox is
that \kCFA{} for object-oriented programs and \kCFA{} for the \lc{} is
just a case of using the same name for two different concepts, we need
to be confident that the analysis we are working with is really
\kCFA.
To achieve that confidence, we return to the source of
\kCFA---Shivers's dissertation~\cite{mattmight:Shivers:1991:CFA},
which formally and precisely pins down its meaning.
We take only cosmetic liberties in reformulating Shivers's \kCFA---we
convert from a tail-recursive denotational semantics to a small-step
operational semantics, and we rename contours to times.
Though equivalent, Shivers's original formulation of \kCFA{} differs
significantly from later ones; readers familiar with only
modern CFA theory may even find it unusual.
Once we have reformulated \kCFA, our goal will be to adapt it as
literally as possible to Featherweight Java.

%

\subsection{A grammar for CPS}

A minimal grammar for CPS (Figure~\ref{fig:cps-grammar}) contains two 
expression forms---\lamterm s and variables---and one call form.
\begin{figure}
\begin{small}
\begin{gather*}
\begin{align*}
  \lam \in \syn{Lam} &\produces \lamform{v_1 \ldots v_n}{\call}^\lab
  &
  \vv \in \syn{Var}
  \\  
  \call \in \syn{Call} &\produces \appform{f}{e_1 \ldots e_n}^\lab
  &
  f,\expr \in \syn{Exp} &= \syn{Var} + \syn{Lam}  
\end{align*}
  \\
  \lab \in \syn{Lab} \text{ is a set of labels}
\end{gather*}\end{small}%
\caption{Grammar for CPS}
\label{fig:cps-grammar}
\end{figure}
The body of every \lamterm{} is a call site, which ensures the CPS
constraint that functions cannot directly return to their callers.
We also attach a unique label to every \lamterm{} and call site.

\subsection{Concrete semantics for CPS}
We model the semantics for CPS as a small-step state machine.
Each state in this machine contains the current call site, a binding
environment in which to evaluate that call, a store and a time-stamp:
\begin{small}\begin{align*}
  \state \in \State &=
  \syn{Call} \times 
  \s{BEnv} \times 
  \s{Store} \times
  \s{Time}
  \\
  \benv \in \s{BEnv} &= \syn{Var} \parto \s{Addr} 
  \\
  \store \in \s{Store} &= \s{Addr} \parto \s{D}
  \\
  \den \in \Den &= \s{Clo}
  \\
  \clo \in \s{Clo} &= \syn{Lam} \times \s{BEnv}
  \\
  \addr \in \s{Addr} &\text{ is an infinite set of addresses}
  \\
  \tm \in \s{Time} &\text{ is an infinite set of time-stamps}
  \text.
\end{align*}\end{small}%
Environments in this state-space are factored; instead of mapping a
variable directly to a value, a binding environment maps a variable to
an address, and then the store maps addresses to values.
The specific structure of both time-stamps and addresses will be
determined later.
Any infinite set will work for either addresses or time-stamps for the
purpose of defining the meaning of the concrete semantics.
(Specific choices for these sets can simplify proofs of soundness,
which is why they are left unfixed for the moment.)

To inject a call site $\call$ into an initial state, we pair it with an empty
environment, an empty store and a distinguished initial time:
\begin{small}\begin{align*}
  \state_0 &= (\call, [], [], \tm_0)
  \text.
\end{align*}\end{small}%

The concrete semantics are composed of an evaluator for expressions
and a transition relation on states:
\begin{small}\begin{align*}
  \Eval &: \syn{Exp} \times \s{BEnv} \times \s{Store} \parto \s{D}
  &
  (\To) &\subseteq \State \times \State
  \text.
\end{align*}\end{small}%
The evaluator looks up variables, and creates closures over \lamterm s:
\begin{small}\begin{align*}
  \Eval(v,\benv,\store) &= \store(\benv(v))
  &
  \Eval(\lam,\benv,\store) &= (\lam,\benv)
  \text.
\end{align*}\end{small}%
In CPS, there is only one rule to transition from one state to another;
when $\call = \sembr{\appform{f}{e_1 \ldots e_n}^\lab}$:
\begin{gather*}
  (\call,\benv,\store,\tm) 
  \To
  (\call',\benv'',\store',\tm')\text{, where }\\
\begin{small}\begin{align*}
  (\lam,\benv') &= \Eval(f,\benv,\store)
  &
  \den_i &= \Eval(e_i,\benv,\store)
  \\
  \lam &= \sembr{\lamform{v_1 \ldots v_n}{\call'}^{\lab'}}
  &
  \tm' &= \tick(\call,\tm)
  \\
  \addr_i &= \alloc(v_i,\tm')
  &
  \benv'' &= \benv'[v_i \mapsto \addr_i]
  \\
  \store' &= \store[\addr_i \mapsto \den_i]
  \text.
\end{align*}\end{small}%
\end{gather*}
There are two external parameters to this semantics, a function for
incrementing the current time-stamp and a function for allocating
fresh addresses for bindings:
\begin{small}\begin{align*}
  \mathit{tick} &: \syn{Call} \times \s{Time} \to \s{Time}
  \\
  \alloc &: \syn{Var} \times \s{Time} \to \s{Addr}
\end{align*}\end{small}%
It is possible to define a semantics in which the $\tick$ function
does not have access to the current call site, but providing access to
the call site will end up simplifying the proof of soundness for
\kCFA.

Naturally, we expect that new time-stamps and addresses are always unique; formally:
\begin{small}\begin{align}
  &\tm < \tick(\call,\tm)\text.
  \\
  &\text{If } 
  v \neq v'
  \text{, then }
  \alloc(v,\tm) \neq
  \alloc(v',\tm)
  \text. 
  \\
  &\text{If } 
  \tm \neq \tm'
  \text{, then }
  \alloc(v,\tm) \neq
  \alloc(v',\tm')
  \text. 
\end{align}\end{small}%
For the sake of understanding the concrete semantics, the obvious
solution to these constraints is to use the natural numbers for time:
\begin{small}\begin{align*}
  \s{Time} &= \Nats
  &
  \s{Addr} &= \syn{Var} \times \s{Time}
  \text,
\end{align*}\end{small}%
so that the $\tick$ function merely has to increment:
\begin{small}\begin{align*}
  \mathit{tick}(\_,\tm) & = \tm + 1
  &
  \alloc(\vv,\tm) & = (\vv,\tm)
  \text.
\end{align*}\end{small}%

\subsection{Executing the concrete semantics}


The concrete semantics finds the set of states reachable from the
initial state.
The system-space for this process is a set of states:
\begin{small}\begin{align*}
  \sstate \in \SState &= \Pow{\State}
  \text.
\end{align*}\end{small}%
The system-space exploration function is $\tf : \SState \to \SState$,
which maps a set of states to their successors plus the initial state:
\begin{small}\begin{align*}
  \tf(\sstate) &= \setbuild{ \state' }{ \state \in \sstate \text{ and } \state \To \state' } \cup \set{ \state_0 }
  \text,
\end{align*}\end{small}%
Because the function is monotonic, there exists a fixed
point
\begin{small}\begin{align*}
  S = \bigsqcup_{n = 0}^\infty f^n(\emptyset)\text,
\end{align*}\end{small}%
which is the (possibly infinite) set of reachable states.

\subsection{Abstract semantics for CPS: $\boldsymbol k$-CFA}

The development of the abstract semantics parallels the construction
of the concrete semantics.
The abstract state-space is structurally similar to the concrete
semantics:
\begin{small}\begin{align*}
  \astate \in \aState &=
  \syn{Call} \times 
  \sa{BEnv} \times 
  \sa{Store} \times
  \sa{Time}
  \\
  \abenv \in \sa{BEnv} &= \syn{Var} \to \sa{Addr}
  \\
  \astore \in \sa{Store} &= \sa{Addr} \to \aDen
  \\
  \aden \in \aDen &= \Pow{\sa{Clo}}
  \\
  \aclo \in \sa{Clo} &= \syn{Lam} \times \sa{BEnv}
  \\
  \aaddr \in \sa{Addr} &\text{ is a \textbf{finite} set of addresses}
  \\
  \atm \in \sa{Time} &\text{ is a \textbf{finite} set of time-stamps}
  \text.
\end{align*}\end{small}%
There are three major distinctions with the concrete state-space: (1)
the set of time-stamps is finite; (2) the set of addresses is finite;
and (3) the store can return a \emph{set} of values.
We assume the natural partial order $(\wt)$ on this state-space
and its components, along with the associated meaning for least-upper
bound ($\join$).  
For example:
\begin{small}\begin{align*}
  \astore \join \astore' &= \lambda \aaddr .( \astore(\aaddr) \cup
  \astore'(\aaddr) )
  \text.
\end{align*}\end{small}%

A state-wise abstraction map $\absmap : \State \to \aState$ formally relates the concrete
state-space to the abstract state-space:
\begin{small}
\begin{gather*}
\begin{align*}
  \absmap(\call,\benv,\store,\tm) &=
  (\call, \absmap(\benv), \absmap(\store), \absmap(\tm))
  \\
  \absmap(\benv) &= \lambda v . \absmap(\benv(v))
  \\
  \absmap(\store) &= \lambda \aaddr . \!\!\! \bigjoin_{\absmap(\addr) = \aaddr} \!\!\! \absmap(\store(\addr))
  \\
  \absmap(\lam,\benv) &= \set{(\lam,\absmap(\benv))}
\end{align*}\\
\begin{align*}
  \absmap(\addr) &\text{ is fixed by }\aalloc
  &
  \absmap(\tm) &\text{ is fixed by }\atick\text.
\end{align*}\end{gather*}\end{small}%
We cannot choose an abstraction for addresses and time-stamps until we
have chosen the sets $\s{Time}$, $\sa{Time}$, $\s{Addr}$ and
$\sa{Addr}$.

The initial abstract state for a program $\call$ is the direct
abstraction of the initial concrete state:
\begin{small}\begin{align*}
  \astate_0 = \absmap(\state_0) = (\call,\bot,\bot,\absmap(\tm_0))
  \text.
\end{align*}\end{small}%
The abstract semantics has an expression evaluator:
\begin{small}
\begin{gather*}
  \aEval : \syn{Exp} \times \sa{BEnv} \times \sa{Store} \to \aDen
  \\
\begin{align*}
  \aEval(\vv,\abenv,\astore) &= \astore(\abenv(\vv))
  &
  \aEval(\lam,\abenv,\astore) &= \{ (\lam, \abenv) \}
  \text.
\end{align*}\end{gather*}\end{small}%

The abstract transition relation $(\aTo) \subseteq
\aState \times \aState$ mimics its concrete counterpart as well; when
$\call = \sembr{\appform{f}{e_1 \ldots e_n}^\lab}$:
\begin{gather*}
  (\call,\abenv,\astore,\atm) 
  \aTo
  (\call',\abenv'',\astore',\atm')\text{, where }
  \\
\begin{small}\begin{align*}
  (\lam,\abenv') &\in \aEval(f,\abenv,\astore)
  &
  \aden_i &= \aEval(e_i,\abenv,\astore)
  \\
  \lam &= \sembr{\lamform{v_1 \ldots v_n}{\call'}^{\lab'}}
  &
  \atm' &= \atick(\call,\atm)
  \\
  \aaddr_i &= \aalloc(v_i,\atm')
  &
  \abenv'' &= \abenv'[v_i \mapsto \aaddr_i]
  \\
  \astore' &= \astore \join [\aaddr_i \mapsto \aden_i]\text.
\end{align*}\end{small}\end{gather*}
Notable differences are the fact that this rule is non-deterministic
(it branches to every abstract closure to which the function $f$
evaluates), and that every abstract address could represent several
concrete addresses, which means that additions to the store must be
performed with a join operation $(\join)$ rather than an extension.
There are also external parameters for the abstract semantics
corresponding to the external parameters of the concrete semantics:
\begin{small}\begin{align*}
  \atick &: \syn{Call} \times \sa{Time} \to \sa{Time}
  \\
  \aalloc &: \syn{Var} \times \sa{Time} \to \sa{Addr}
\end{align*}\end{small}%
The $\atick$ function allocates an abstract time, which is allowed to
be an abstract time which has been allocated previously; the allocator
$\aalloc$ is similarly allowed to re-allocate previously-allocated addresses.

\subsection{Constraints from soundness}

The standard soundness theorem requires that the abstract semantics
simulate the concrete semantics; the key inductive step shows
simulation across a single transition:
\begin{theorem}
  If 
  \(
    \state \To \state' \text{ and } \absmap(\state) \wt \astate
    \text,
    \)
  then there must exist an abstract state $\astate'$ such that:
  \(
    \astate \To \astate' \text{ and } \absmap(\state') \wt \astate'
    \text.
    \)
\end{theorem}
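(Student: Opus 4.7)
The plan is to prove the simulation by unpacking the concrete transition, invoking the abstract rule to construct the witness $\astate'$, and then verifying component-wise that $\absmap(\state') \wt \astate'$. (I assume the conclusion is $\astate \aTo \astate'$, with the abstract arrow; the displayed $\To$ in the statement appears to be a typo, since the concrete $\To$ is a function, not a relation branching over closures.) First I would write $\state = (\call,\benv,\store,\tm)$ with $\call = \sembr{\appform{f}{e_1 \ldots e_n}^\lab}$ and expand $\state' = (\call',\benv'',\store',\tm')$ according to the concrete transition clauses. The hypothesis $\absmap(\state) \wt \astate$ then forces $\astate$ to have the form $(\call, \abenv, \astore, \atm)$ with $\absmap(\benv) \wt \abenv$, $\absmap(\store) \wt \astore$, and $\absmap(\tm) \wt \atm$; the call site component is shared because abstraction is the identity on $\syn{Call}$.

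Next I would establish the routine evaluator-soundness sublemma: if $\absmap(\benv) \wt \abenv$ and $\absmap(\store) \wt \astore$, then $\absmap(\Eval(e,\benv,\store)) \in \aEval(e,\abenv,\astore)$. The variable case is a chase through $\absmap(\store)(\absmap(\benv)(v)) \wt \astore(\abenv(v))$, and the \lamterm{} case is immediate from the definition of $\absmap$ on closures. Applied to the operator $f$, this yields an abstract closure $(\lam,\abenv') \in \aEval(f,\abenv,\astore)$ matching the concrete $(\lam,\benv')$, and applied to each $e_i$ it gives $\absmap(\den_i) \in \aden_i$. I can then build the witness $\astate'$ by firing the abstract rule along this $(\lam,\abenv')$, producing $\atm' = \atick(\call,\atm)$, $\aaddr_i = \aalloc(v_i,\atm')$, $\abenv'' = \abenv'[v_i \mapsto \aaddr_i]$, and $\astore' = \astore \join [\aaddr_i \mapsto \aden_i]$.

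To conclude $\absmap(\state') \wt \astate'$, I would check each slot. The call-site slot agrees on the nose. For the time and address slots I would appeal to the compatibility requirements the authors defer to the choice of abstractions, namely $\absmap(\tick(\call,\tm)) \wt \atick(\call,\absmap(\tm))$ and $\absmap(\alloc(v_i,\tm')) \wt \aalloc(v_i,\absmap(\tm'))$; these immediately give $\absmap(\benv'') \wt \abenv''$ after monotonicity of environment extension. The store case is the main obstacle: the concrete update writes the single value $\den_i$ at the fresh address $\addr_i$, but abstraction may collapse $\addr_i$ with previously abstracted addresses, and then the abstract update joins $\aden_i$ with the pre-existing entry at $\aaddr_i$. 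I would discharge it by fixing an arbitrary $\aaddr$ and splitting into the cases $\aaddr = \aaddr_i$ for some $i$ versus $\aaddr$ not in the image of the updated binders; in the former case I use the fact that $\join$ over-approximates every concrete address $\addr$ with $\absmap(\addr) = \aaddr$, and in the latter case the store update is inert and the inductive hypothesis $\absmap(\store) \wt \astore$ carries through.

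The only real subtlety beyond bookkeeping is guaranteeing that the soundness constraints we impose on $\atick$ and $\aalloc$ suffice; I would therefore state them explicitly as hypotheses (or as definitional obligations on the parameters of \kCFA) before the simulation proof, so that the store-merging step above is actually licensed. Everything else is a mechanical induction over the single transition rule.
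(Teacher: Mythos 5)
Your proposal is correct and takes essentially the same approach as the paper: the paper offers no detailed proof at all, stating only that the theorem ``reduces to two lemmas'' --- precisely the $\tick$/$\alloc$ simulation conditions (Lemmas~\ref{lemma:time-stamp-simulation} and \ref{lemma:allocator-simulation}) that you isolate as the deferred compatibility hypotheses --- while the evaluator-soundness sublemma and the store-join case split you spell out are the mechanical steps the paper leaves implicit. You are also right that the $\astate \To \astate'$ in the statement should be read as the abstract transition $\astate \aTo \astate'$.
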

The proof reduces to two lemmas which must be
proved for every choice of the sets $\s{Time}$, $\sa{Time}$,
$\s{Addr}$ and $\sa{Addr}$:
\begin{lemma}
\label{lemma:time-stamp-simulation}
$\text{If } 
  \absmap(\tm) \wt \atm
  \text{, then }
  \absmap(\tick(\call,\tm)) \wt \atick(\call,\atm)
  \text.$
\end{lemma}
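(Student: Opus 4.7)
The plan is to commit to concrete choices for the time-stamp domains that realize the $k$-CFA abstraction, and then observe that the lemma collapses to an elementary identity about sequence truncation. Following Shivers, I would take the concrete time-stamps to be histories of call sites, $\s{Time} = \syn{Call}^*$, with $\tm_0 = \epsilon$ and $\tick(\call,\tm) = \call :: \tm$ (that is, push the current call onto the front of the history). On the abstract side I would take $\sa{Time} = \syn{Call}^{\leq k}$ with $\atick(\call,\atm) = \lfloor \call :: \atm\rfloor_k$, where $\lfloor s \rfloor_k$ denotes the prefix of length $\min(k,|s|)$. The abstraction is then $\absmap(\tm) = \lfloor \tm \rfloor_k$. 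Note that $\sa{Time}$ carries the discrete order, so $\wt$ on it is just equality; the hypothesis $\absmap(\tm) \wt \atm$ therefore reduces to the equation $\lfloor \tm \rfloor_k = \atm$. (Before using these definitions, I would also discharge the three concrete uniqueness constraints: $\tm < \call :: \tm$ under the proper-prefix order, and the two allocator constraints with $\alloc(v,\tm) = (v,\tm)$ by pairing.)

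Under these choices the lemma reduces to the identity
\begin{equation*}
  \lfloor \call :: \tm \rfloor_k \;=\; \lfloor \call :: \lfloor \tm \rfloor_k \rfloor_k\text,
\end{equation*}
obtained by substituting $\atm = \lfloor \tm \rfloor_k$ into the goal $\absmap(\tick(\call,\tm)) = \atick(\call,\atm)$. I would establish the identity by a short case split on the length of $\tm$: if $|\tm| < k$ then $\lfloor \tm \rfloor_k = \tm$ and both sides are syntactically equal; if $|\tm| \geq k$ then both $\call :: \tm$ and $\call :: \lfloor \tm \rfloor_k$ have $\call$ as their first element followed by the first $k-1$ elements of $\tm$, so they agree on their first $k$ positions and their truncations coincide.

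There is no real obstacle once the domains are fixed; the substantive design decision is paragraph one. Shivers's expository choice of $\s{Time} = \Nats$ with $\tick(\_,\tm) = \tm + 1$, retained from the preceding informal discussion, is not suitable here, because a bare counter carries no information to be abstracted into a calling context. Switching to $\syn{Call}^*$ for the concrete time is precisely what makes $\absmap$ commute (up to truncation) with $\tick$, and consequently what makes the simulation go through; the lemma itself is then essentially one line of calculation.
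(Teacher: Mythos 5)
Your proposal is correct and follows essentially the paper's own route: the paper treats this lemma as a proof obligation to be discharged by each instantiation of the time-stamp domains, and its ``$k$-CFA solution'' fixes precisely your choices ($\s{Time} = \syn{Call}^*$ with $\tick(\call,\tm) = \call : \tm$, $\sa{Time} = \syn{Call}^k$ with $\atick(\call,\atm) = \mathit{first}_k(\call : \atm)$, and $\absmap = \mathit{first}_k$), asserting that these satisfy the simulation lemmas without spelling out the verification. Your truncation identity $\mathit{first}_k(\call : \tm) = \mathit{first}_k(\call : \mathit{first}_k(\tm))$, proved by the case split on $|\tm|$, is exactly the elementary calculation the paper leaves implicit, so the two arguments coincide.
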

\begin{lemma}
\label{lemma:allocator-simulation}
$\text{If } 
  \absmap(\tm) \wt \atm
  \text{, then }
  \absmap(\alloc(\vv,\tm)) \wt \aalloc(\vv,\atm)
  \text.$
\end{lemma}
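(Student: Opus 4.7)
The plan is to commit to the standard k-CFA choices for the four parameter sets $\s{Time}$, $\sa{Time}$, $\s{Addr}$, $\sa{Addr}$ and for the allocation functions, and then observe that the lemma falls out by a one-line unfolding of definitions.

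First I would instantiate the time-stamps along Shivers's original lines, dropping the placeholder $\Nats$-valued clock used to explain the concrete semantics. Let $\s{Time} = \syn{Lab}^{*}$ record the entire sequence of call-site labels traversed (most recent first), so that $\tick(\sembr{\appform{f}{e_1\ldots e_n}^\lab},\tm) = \lab : \tm$, and let $\sa{Time} = \syn{Lab}^{\leq k}$ record only the $k$ most recent labels, with $\atick$ prepending and truncating. The time-abstraction $\absmap : \s{Time} \to \sa{Time}$ is then the truncation map that keeps the first $k$ entries of its argument.

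Second I would take $\s{Addr} = \syn{Var} \times \s{Time}$, $\sa{Addr} = \syn{Var} \times \sa{Time}$, and set $\alloc(v,\tm) = (v,\tm)$, $\aalloc(v,\atm) = (v,\atm)$. These are the natural k-CFA address choices, and they are what force the address-abstraction to factor through the time-abstraction: $\absmap(v,\tm) = (v,\absmap(\tm))$, with $\wt$ on abstract addresses being equality in the variable component and $\wt$ on $\sa{Time}$ in the time component. This is exactly the shape presupposed by the definition of $\absmap$ on binding environments given in the excerpt.

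With these instantiations, the proof of the lemma is immediate. Assume $\absmap(\tm) \wt \atm$. Unfolding both sides,
\begin{align*}
  \absmap(\alloc(v,\tm))
  \;=\; \absmap(v,\tm)
  \;=\; (v, \absmap(\tm))
  \;\wt\; (v, \atm)
  \;=\; \aalloc(v,\atm),
\end{align*}
where the single nontrivial step is the componentwise $\wt$-comparison of pairs, which is discharged by reflexivity on the variable coordinate and by the hypothesis on the time coordinate.

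The only real obstacle is not in the calculation but in fixing the instantiations consistently: one must pick $\alloc$ and $\aalloc$ so that the address-abstraction is a homomorphism with respect to the time-abstraction (equivalently, so that $\alloc$ retains no concrete information the abstract allocator cannot see). Any choice of $\alloc$ that depended on data outside $(v,\tm)$, or any abstract address space that did not mirror the pairing structure, would force a genuinely harder proof; the k-CFA choices above sidestep this by construction, at which point the lemma becomes a corollary of monotonicity of pairing.
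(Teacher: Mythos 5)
Your proposal is correct and takes essentially the same route as the paper: the paper states this lemma as a proof obligation parameterized by the choice of $\s{Time}$, $\sa{Time}$, $\s{Addr}$, $\sa{Addr}$, and discharges it precisely by exhibiting the $k$-CFA instantiation you commit to ($\alloc(\vv,\tm)=(\vv,\tm)$, $\aalloc(\vv,\atm)=(\vv,\atm)$, $\absmap(\vv,\tm)=(\vv,\absmap(\tm))$, with $\absmap(\tm)=\mathit{first}_k(\tm)$), after which the simulation follows by exactly the componentwise unfolding you give. The only cosmetic divergences are that the paper's CPS times are sequences of call sites ($\syn{Call}^*$ abstracted to $\syn{Call}^k$) rather than labels, and that the paper leaves the one-line calculation implicit where you write it out.
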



\subsubsection{The $\boldsymbol k$-CFA solution}

\kCFA{} represents one solution to the Simulation
Lemmas~\ref{lemma:time-stamp-simulation} and
\ref{lemma:allocator-simulation}.
In \kCFA{}, a concrete time-stamp is the sequence of call
sites traversed since the start of the program; an abstract
time-stamp is the last $k$ call sites.
An address is a variable plus its binding time:
\begin{small}\begin{align*}
  \s{Time} &= \syn{Call}^*                    & \sa{Time} &= \syn{Call}^k
  \\
  \s{Addr} &= \syn{Var} \times \s{Time} \;\;\;\;\;\;\;\;\;\;\;
  & \sa{Addr} &= \syn{Var} \times \sa{Time}
  \text.
\end{align*}\end{small}%
In theory, \kCFA{} is able to distinguish up to $\abs{\syn{Call}}^k$
instances (variants) of each variable---one for each invocation
context.
Of course, in practice, each variable tends to be bound in only a
small fraction of all possible invocation contexts.
Under this allocation regime, the external parameters are easily fixed:
\begin{small}\begin{align*}
  \mathit{tick}(\call,\tm) & = \call : \tm   \;\;\;\;\;\;\;\;\;\;\;
  & \atick(\call,\atm) &= \mathit{first}_k(\call : \atm)
  \\
  \alloc(\vv,\tm) & = (\vv,\tm)             & \aalloc(\vv,\atm) &= (\vv,\atm)
  \text,
\end{align*}\end{small}%
which leaves only one possible choice for the abstraction maps:
\begin{small}\begin{align*}
  \absmap(\tm) &= \mathit{first}_k(\tm)
  &
  \absmap(\vv,\tm) &= (\vv,\absmap(\tm))
  \text.
\end{align*}\end{small}%
In technical terms, $\atick$ determines the context-sensitivity of the
analysis, and $\aalloc$ determines its polyvariance.

\subsection{Computing $\boldsymbol k$-CFA na\"ively}

\kCFA{} can be computed na\"ively by finding the set of reachable
states.
The ``system-space'' for this approach is a set of states:
\begin{small}\begin{align*}
  \asstate \in \aSState &= \Pow{\aState}
  \text.
\end{align*}\end{small}%
The transfer function for this system-space is $\atf : \aSState \to \aSState$:
\begin{small}\begin{align*}
  \atf(\asstate) &= \setbuildsm{ \astate' }{ \astate \in \asstate \text{ and } \astate \aTo \astate' } \cup \set{ \astate_0 }
  \text.
\end{align*}\end{small}%
The size of the state-space bounds the complexity of na\"ive
\kCFA{}:\footnote{Because $\aalloc(v,\tm) = (v,\tm)$, we could encode
  every binding environment with a map from variables to just times,
  so  that, effectively, $|\sa{BEnv}| = |\syn{Var} \parto \sa{Time}| =
  |\sa{Time}|^{|\syn{Var}|} = |\syn{Call}|^{k\times{|\syn{Var}|}}$.}
\begin{small}\begin{align*}
  |\syn{Call}|
  \times 
  \overbrace{|\syn{Call}|^{k\times{|\syn{Var}|}}}^{|\sa{BEnv}|}
  \times
  \overbrace{\left(2^{|\syn{Lam}|\times|\syn{Call}|^{k \times {|\syn{Var}|}}}\right)^{|\syn{Var}|\times |\syn{Call}|^k}}^{|\sa{Store}|}
  \times 
  \overbrace{|\syn{Call}|^k}^{|\sa{Time}|}
\end{align*}\end{small}%
Even for $k = 0$, this method is deeply exponential, rather than the
expected cubic time more commonly associated with 0CFA.

\subsection{Computing $\boldsymbol k$-CFA with a single-threaded store} 

Shivers's technique for making \kCFA{} more efficient uses one
store to represent all stores.
Any set of stores may be conservatively approximated by 
their least-upper-bound.
Under this approximation, the system-space needs only one
store:
\begin{small}\begin{align*}
  \aSState =
  \Pow{\syn{Call} \times \sa{BEnv} \times \sa{Time}} \times \sa{Store}
  \text.
\end{align*}\end{small}%
Over this system-space, the transfer function becomes:
\begin{small}\begin{align*}
  \atf(\hat C,\astore) &= (\hat C \cup \hat C', \astore')
  \\
  \hat S' &= \setbuild{ \astate' }{ \hat c \in \hat C \text{ and } (\hat c,\astore) \aTo \astate' }
  \\
  \hat C' &= \setbuild{ \hat c }{ (\hat c,\astore) \in \hat S' }
  \\
  \astore' &= \!\!\!\! \bigjoin_{ (\hat c,\astore) \in \hat S' } \!\!\!\! { \astore }
  \text.
\end{align*}\end{small}%
[This formulation of the transfer function assumes
that the store grows monotonically across transition, \ie, that
$(\ldots,\astore,\atm) \aTo (\ldots,\astore',\tm')$ implies $\astore
\wt \astore'$.]

To compute the complexity of this analysis, note the isomorphism in the system-space:
\begin{small}\begin{equation*}
  \aSState \cong
  \left(
    \syn{Call} \to \Pow{\sa{BEnv} \times \sa{Time}}\right)
  \times
  \left(
    \sa{Addr} \to \Pow{\sa{Clo}}
  \right)
  \text,
\end{equation*}\end{small}%
Because the function $\atf$ is monotonic, 
the height of the lattice $\aSState$:
\begin{small}\begin{align*}
  &
  \abs{\syn{Call}}
  \times
    \overbrace{
  \abs{\syn{Call}}^{k\times{\abs{\syn{Var}}}}
}^{\abs{\sa{BEnv}}}
  \times
  \overbrace{
  \abs{\syn{Call}}^k
  }^{\abs{\sa{Time}}}
  \\ 
  +\; &
  \overbrace{
  \abs{\syn{Var}} 
  \times 
  \abs{\syn{Call}}^k
  }^{\abs{\sa{Addr}}}
  \times
  \overbrace{
    \abs{\syn{Lam}}
    \times
    \abs{\syn{Call}}^{k \times {\abs{\syn{Var}}}}
  }^{\abs{\sa{Clo}}}
  \text,
\end{align*}\end{small}%
bounds the maximum number of times we may have to apply the
abstract transfer function. 
For $k = 0$, the height of the lattice is quadratic in the size of the
program (with the cost of applying the transfer function linear
in the size of the program).
For $k \geq 1$, however, the algorithm has a genuinely exponential
system-space.

\section{Shivers's $\boldsymbol k$-CFA{} for Java}
\label{sec:kcfa-java}

Having formulated a small-step \kCFA{} for CPS, it is straightforward
to formulate a small-step, abstract interpretive \kCFA{} for Java.
To simplify the presentation, we utilize Featherweight
Java~\cite{dvanhorn:Igarashi:TOPLAS:2001} in ``A-Normal'' form.
A-Normal Featherweight Java is identical to ordinary Featherweight
Java, except that arguments to a function call must be atomically
evaluable, as they are in A-Normal Form $\lambda$-calculus.
For example, the body {\tt return f.foo(b.bar());} becomes the sequence of statements {\tt B b1
  = b.bar(); F f1 = f.foo(b1); return f1;}.
This shift does not change the expressive power of the language or the
nature of the analysis, but it does simplify the semantics by
eliminating semantic expression contexts.
The following grammar describes A-Normal Featherweight Java; note the
(re-)introduction of statements:
\begin{small}\begin{align*}
  \syn{Class} &\produces \classform{\className}{\className'}{
    \overrightarrow{\className''\; \fieldName\; \ttsc}\; 
    \constDef\;
    \overrightarrow{\methodDef}
  }
  \\
  \constDef \in \syn{Konst} 
  &
  \produces
  \className\; \ttlp 
   \overrightarrow{
   \className\; \fieldName\;
  }
  \ttrp
  \ttlc
  \mathtt{super}\ttlp 
   \overrightarrow{\fieldName'}
  \ttrp\;
  \ttsc\;
   \overrightarrow{\mathtt{this} \ttfs \fieldName'' \mathrel{=} \fieldName''' \ttsc}
  \ttrc
  \\
  \methodDef \in \syn{Method} 
  &
  \produces
  \className\; \methodName\; \ttlp 
   \overrightarrow{
   \className\; \vv\;
  }
  \ttrp\;
  \ttlc\;
  \overrightarrow{\className\; \vv\; \ttsc}\;
  \vec{\stmt}
  \;\ttrc
  \\
  \stmt \in \syn{Stmt} &
  \produces
  \vv = 
  \expr
  \;
  \ttsc^\lab
  \opor
  \texttt{return}\; \vv\; \ttsc^\lab
  \\
  \expr \in \syn{Exp} &
  \produces
  \vv 
  \opor 
  \vv \ttfs \fieldName
  \opor
  \vv \ttfs \methodName \ttlp 
   \overrightarrow{\vv}
  \ttrp
  \opor
  \mathtt{new} \; \className\; \ttlp 
   \overrightarrow{\vv}
  \ttrp
  \opor
  \ttlp \className \ttrp \vv
\end{align*}\end{small}%
\begin{small}\begin{align*}
  \fieldName \in \syn{FieldName} &= \syn{Var}
  \\
  \className \in \syn{ClassName} &\text{ is a set of class names}
  \\
  \methodName \in \syn{MethodCall} &\text{ is a set of method invocation sites}
  \\
  \lab \in \syn{Lab} &\text{ is a set of labels}
\end{align*}\end{small}%
The set $\syn{Var}$ contains both variable and field names.
Every statement has a label.
%
%
The function $\ssucc : \syn{Lab} \parto \syn{Stmt}$ yields the
subsequent statement for a statement's label.

\subsection{Concrete semantics for Featherweight Java}

\begin{figure}
\begin{small}
\begin{small}\begin{align*}
  \state \in \State &= 
  \syn{Stmt} \times 
  \s{BEnv} \times 
  \s{Store} \times
  \s{KontPtr} \times
  \s{Time}  
  \\
  \benv \in \s{BEnv} &= \syn{Var} \parto \s{Addr} 
  \\
  \store \in \s{Store} &= \s{Addr} \parto \s{D}
  \\
  \den \in \Den &= \s{Val}
  \\
  \val \in \s{Val} &= \s{Obj} + \s{Kont}
  \\
  \obj \in \s{Obj} &= \syn{ClassName} \times \s{BEnv}
  \\
  \cont \in \s{Kont} &= 
   \syn{Var} \times \syn{Stmt} \times \s{BEnv} \times \s{KontPtr}
  \\
  \addr \in \s{Addr} &\text{ is a set of addresses}
  \\
  \contptr \in \s{KontPtr} &\subseteq \s{Addr}
  \\
  \tm \in \s{Time} &\text{ is a set of time-stamps}  
  \text.  
\end{align*}\end{small}%
\end{small}%
\caption{Concrete state-space for A-Normal Featherweight Java.}
\label{fig:java-concrete-state-space}
\end{figure}

Figure~\ref{fig:java-concrete-state-space} contains the concrete
state-space for the small-step Featherweight Java machine, and
Figure~\ref{fig:java-concrete-semantics} contains the concrete
semantics.\footnote{
  Note that the $(+)$ operation represents right-biased functional
  union, and that wherever a vector $\vec{x}$ is in scope, its
  components are implicitly in scope: $\vec{x} = \vect{x_0, \ldots,
    x_{\mathit{length(\vec{x})}}}$.  }
The state-space closely resembles the concrete state-space for CPS.
One difference is the need to explicitly allocate continuations (from
the set $\s{Kont}$) at a semantic level.
These same continuations exist in CPS, but they're hidden in plain
sight---the CPS transform converts semantic
continuations into syntactic continuations.

It is important to note the encoding of objects:
objects are a class plus a record of their fields, and
the record component is encoded as a binding environment that maps
field names to their addresses.
This encoding is congruent to \kCFA's encoding of closures, but it is
probably not the way one would encode the record component of an
object if starting from scratch.
The natural encoding would reduce an object to a class plus a single
base address, \ie, $\Obj = \syn{ClassName} \times \s{Addr}$, since
fields are accessible as offsets from the base address.
Then, given an object $(\className,\addr)$, the address of field $f$
would be $(f,\addr)$.
In fact, under our semantics, given an object $(\className,\benv)$, it
is effectively the case that $\benv(f) = (f,\addr)$.
We are choosing the functional representation of records to
maintain the closest possible correspondence with CPS. 
When investigating the complexity of \kCFA{} for Java, we will exploit
this observation: the fact that objects can be represented with just
a base address causes the collapse in complexity.

The concrete semantics are encoded as a small-step transition relation
$(\To) \subseteq \State \times \State$.
Each expression type gets a transition rule.
\emph{Object allocation creates a new binding environment
  $\benv'$, which shares no structure with the previous environment
  $\benv$; contrast this with CPS.}
These rules use the helper functions described in
Figure~\ref{fig:concrete-anfw-java-helper}.
The constructor-lookup function $\FetchRuctor$ yields the field names
and the constructor associated with a class name.
A constructor $\Ructor$ takes newly allocated addresses to use for
fields and a vector of arguments; it returns the change to the store
plus the record component of the object that results from running the
constructor.
The method-lookup function $\MethodLookup$ takes a method invocation point and an
object to determine which method is actually being called at that
point.

\begin{figure}
  \begin{small}\begin{align*}
    \FetchRuctor &: \syn{ClassName} \to (\syn{FieldName}^* \times \s{Ructor})
    \\
  \Ructor &\in 
  \s{Ructor} = 
  \overbrace{\s{Addr}^*}^{\text{fields}} 
  \times 
  \overbrace{\Den^*}^{\text{arguments}}
  \to
  (
  \overbrace{
    \s{Store}
  }^{\text{field values}}
  \times 
  \overbrace{
    \s{BEnv}
  }^{\text{record}}
  )
  \\
  \MethodLookup & : \s{D} \times \syn{MethodCall} \parto \syn{Method}
  \end{align*}\end{small}%

  \caption{Helper functions for the concrete semantics.}
  \label{fig:concrete-anfw-java-helper}
\end{figure}

\begin{figure}
\paragraph{Variable reference}
\begin{gather*}
  (
  \sembr{\vv = \vv' \ttsc^\lab},
  \benv,
  \store,
  \contptr,
  \tm
  )
  \To
  (
  \ssucc(\lab),
  \benv,
  \store',
  \contptr,
  \tm'
  )\text{, where }\\
\begin{small}\begin{align*}
  \tm' &= \tick(\lab, \tm) 
  &
  \store' &= \store[\benv(\vv) \mapsto \store(\benv(\vv'))]
  \text.
\end{align*}\end{small}%
\end{gather*}

\paragraph{Return}
\begin{gather*}
  (\sembr{\texttt{return}\; \vv\; \ttsc^\lab}, \benv, \store, \contptr, \tm)
  \To
  (\stmt, \benv', \store', \contptr', \tm')\text{, where }
  \\
\begin{small}\begin{align*}
  \tm' &= \tick(\lab, \tm)
  &
  (\vv',\stmt, \benv', \contptr') &= \store(\contptr)
  \\
  \den &= \store(\benv(\vv))
  &
  \store' &= \store[\benv'(\vv') \mapsto \den]
  \text.
\end{align*}\end{small}%
\end{gather*}

\paragraph{Field reference}
\begin{gather*}
  (
  \sembr{\vv = \vv'\ttfs \fieldName\; \ttsc^\lab},
  \benv,
  \store,
  \contptr,
  \tm
  )
  \To
  (
  \ssucc(\lab),
  \benv,
  \store',
  \contptr,
  \tm'
  )\text{, where }\\
\begin{aligned}
  \tm' & = \tick(\lab, \tm)
  &
  (\className, \benv') &= \store(\benv(\vv'))  
  &
  \store' &= \store[\benv(\vv) \mapsto \store(\benv'(\fieldName))]
  \text.
\end{aligned}
\end{gather*}

\paragraph{Method invocation}
\begin{gather*}
\begin{split}
  &(
  \sembr{\vv = \vv_0 \ttfs \methodName \ttlp
    \overrightarrow{\vv'}
  \ttrp \ttsc^\lab},
  \benv,
  \store,
  \contptr,
  \tm
  )
  \To
  (
  \stmt_0,
  \benv'',
  \store',
  \contptr',
  \tm'
  )\text{,}\\  
  &\text{where }
\end{split}
  \\
  \methodDef = 
  \sembr{\className\; \methodName\; \ttlp 
   \overrightarrow{
   \className\; \vv''\;
  }
  \ttrp\;
  \ttlc
  \overrightarrow{\className'\; \vv'''\; \ttsc}\;
  \vec{\stmt}
  \ttrc}
  = \MethodLookup(\den_0,\methodName)\\
\begin{small}\begin{align*}
  \den_0 &= \store(\benv(\vv_0))
  &
  \den_i &= \store(\benv(\vv'_i))
  \\
  \tm' &= \tick(\lab,\tm)
  &
  \cont &= (\vv,\ssucc(\lab), \benv, \contptr)
  \\
  \contptr' &= \alloc_\cont(\methodDef,\tm')
  &
  \addr'_i &= \alloc(\vv_i'',\tm')
  \\
  \addr''_j &= \alloc(\vv_j''',\tm')
  &
  \benv' &= [\sembr{\tt this} \mapsto \benv(\vv_0)]
  \\
  \benv'' &= \benv'[\vv_i'' \mapsto \addr_i', \vv'''_j \mapsto \addr_j'']
  &
  \store' &= \store [\contptr' \mapsto \cont, \addr'_i \mapsto \den_i]
  \text.
\end{align*}\end{small}%
\end{gather*}

\paragraph{Object allocation}
\begin{gather*}
\begin{split}
  &(
  \sembr{
    \vv = 
    {\tt new}\; \className\; \ttlp
    \overrightarrow{\vv'}
  \ttrp \ttsc^\lab},
  \benv,
  \store,
  \contptr,
  \tm
  )
  \To
  (
  \ssucc(\lab),
  \benv,
  \store',
  \contptr,
  \tm'
  )\text{,}\\
&\text{where }
\end{split}\\
\begin{small}\begin{align*}
  \tm' &= \tick(\lab,\tm)
  &
  \den_i &= \store(\benv(\vv_i'))
  \\
  (\vec{\fieldName},\Ructor) &=
   \FetchRuctor(\className)
  &
  \addr_i &= \alloc(\fieldName_i,\tm')
  \\
  (\Delta \store, \benv') &= 
   \Ructor(\vec{\addr}, \vec{\den})
  &
  \den' &= (\className, \benv')
  \\
  \store' &= \store + \Delta \store + [\benv(\vv) \mapsto \den']
  \text.
\end{align*}\end{small}%
\end{gather*}

\paragraph{Casting}
\begin{gather*}
  (
  \sembr{\vv = \ttlp C'\ttrp\; \vv'},
  \benv,
  \store,
  \contptr,
  \tm
  )
  \To
  (
  \ssucc(\lab),
  \benv,
  \store',
  \contptr,
  \tm'
  )\text{, where}\\
\begin{small}\begin{align*}
  \tm' &= \tick(\lab, \tm)
  &
  \store' &= \store[\benv(\vv) \mapsto \store(\benv(\vv'))]
  \text.
\end{align*}\end{small}%
\end{gather*}
\caption{Concrete semantics for A-Normal Featherweight Java.}
\label{fig:java-concrete-semantics}
\end{figure}

\subsection{Abstract semantics: $\boldsymbol k$-CFA for Featherweight Java}

\begin{figure}
\begin{small}
\begin{small}\begin{align*}
  \astate \in \aState &= 
  \syn{Stmt} \times 
  \sa{BEnv} \times 
  \sa{Store} \times
  \sa{KontPtr} \times
  \sa{Time}  
  \\
  \abenv \in \sa{BEnv} &= \syn{Var} \parto \sa{Addr} 
  \\
  \astore \in \sa{Store} &= \sa{Addr} \to \sa{D}
  \\
  \aden \in \aDen &= \Pow{\sa{Val}}
  \\
  \aval \in \sa{Val} &= \sa{Obj} + \sa{Kont}
  \\
  \aobj \in \sa{Obj} &= \syn{ClassName} \times \sa{BEnv}
  \\
  \acont \in \sa{Kont} &= 
   \syn{Var} \times \syn{Stmt} \times \sa{BEnv} \times \sa{KontPtr}
  \\
  \aaddr \in \sa{Addr} &\text{ is a \textbf{finite} set of addresses}
  \\
  \acontptr \in \sa{KontPtr} &\subseteq \sa{Addr}
  \\
  \atm \in \sa{Time} &\text{ is a \textbf{finite} set of time-stamps}  
  \text.  
\end{align*}\end{small}%
\end{small}%
\caption{Abstract state-space for A-Normal Featherweight Java.}
\label{fig:java-abstract-state-space}
\end{figure}

Figure~\ref{fig:java-abstract-state-space} contains the abstract
state-space for the small-step Featherweight Java machine, \ie, OO
$k$-CFA.
As was the case for CPS, the abstract semantics closely mirror the
concrete semantics.
We assume the natural partial order for the components of the abstract
state-space.

The abstract semantics are encoded as a small-step transition relation
$(\aTo) \subseteq \aState \times \aState$, shown in
Figure~\ref{fig:abstract-anfw-java}.
There is one abstract transition rule for each expression type, plus
an additional transition rule to account for return.
These rules make use of the helper functions described in
Figure~\ref{fig:abstract-anfw-java-helper}. 
The constructor-lookup function $\aFetchRuctor$ yields the field names
and the abstract constructor associated with a class name.
An abstract constructor $\aRuctor$ takes abstract addresses to use for
fields and a vector of arguments; it returns the ``change'' to the
store plus the record component of the object that results from
running the constructor.
The abstract method-lookup function $\aMethodLookup$ takes a method
invocation point and an object to determine which methods could be
called at that point.

\begin{figure}
  \begin{small}\begin{align*}
    \aFetchRuctor &: \syn{ClassName} \to (\syn{FieldName}^* \times \sa{Ructor})
    \\
  \aRuctor &\in 
  \sa{Ructor} = 
    \sa{Addr}^*
  \times 
    \aDen^*
    \to
  (
  \sa{Store}
  \times 
  \sa{BEnv}
  )
  \\
  \aMethodLookup & : \sa{D} \times \syn{MethodCall} \to \Pow{\syn{Method}}
  \end{align*}\end{small}%

  \caption{Helper functions for the abstract semantics.}
  \label{fig:abstract-anfw-java-helper}
\end{figure}

\begin{figure}
\paragraph{Variable reference}
\begin{gather*}
  (
  \sembr{\vv = \vv' \ttsc^\lab},
  \abenv,
  \astore,
  \acontptr,
  \atm
  )
  \aTo
  (
  \ssucc(\lab),
  \abenv,
  \astore',
  \acontptr,
  \atm'
  )\text{, where }\\
\begin{small}\begin{align*}
  \atm' & = \atick(\lab, \atm)
  &
  \astore' &= \astore \join [\abenv(\vv) \mapsto \astore(\abenv(\vv'))]
  \text.
\end{align*}\end{small}%
\end{gather*}

\paragraph{Return}
\begin{gather*}
  (\sembr{\texttt{return}\; \vv\; \ttsc^\lab}, \abenv, \astore, \acontptr, \atm)
  \aTo
  (\stmt, \abenv', \astore', \acontptr', \atm')\text{, where }
  \\
\begin{small}\begin{align*}
  \atm' &= \atick(\lab, \atm)
  &
  (\vv',\stmt, \abenv', \acontptr') &\in \astore(\acontptr)
  \\
  \aden &= \astore(\abenv(\vv))
  &
  \astore' &= \astore \join [\abenv'(\vv') \mapsto \aden]
  \text.
\end{align*}\end{small}%
\end{gather*}

\paragraph{Field reference}
\begin{gather*}
  (
  \sembr{\vv = \vv'\ttfs \fieldName\; \ttsc^\lab},
  \abenv,
  \astore,
  \acontptr,
  \atm
  )
  \aTo
  (
  \ssucc(\lab),
  \abenv,
  \astore',
  \acontptr,
  \atm'
  )\text{, where }
  \\
\begin{small}\begin{align*}
  \atm' & = \atick(\lab, \atm)
  &
  (\className,\abenv') &\in \astore(\abenv(\vv'))
  &
  \astore' &= \astore \join [\abenv(\vv) \mapsto \astore(\abenv'(\fieldName))]
  \text.
\end{align*}\end{small}%
\end{gather*}

\paragraph{Method invocation}
\begin{gather*}
\begin{split}
  &(
  \sembr{\vv = \vv_0 \ttfs \methodName \ttlp
    \overrightarrow{\vv'}
  \ttrp \ttsc^\lab},
  \abenv,
  \astore,
  \acontptr,
  \atm
  )
  \aTo
  (
  \stmt_0,
  \abenv'',
  \astore',
  \acontptr',
  \atm'
  )\text{,}\\
  &\text{where }
\end{split}\\
  \methodDef = 
  \sembr{\className\; \methodName\; \ttlp 
   \overrightarrow{
   \className\; \vv''\;
  }
  \ttrp\;
  \ttlc
  \overrightarrow{\className'\; \vv'''\; \ttsc}\;
  \vec{\stmt}
  \ttrc}
  \in \MethodLookup(\aden_0,\methodName)
  \\
\begin{small}\begin{align*}
  \aden_0 &= \astore(\abenv(\vv_0))
  &
  \aden_i &= \astore(\abenv(\vv'_i))
  \\
  \atm' &= \atick(\lab,\atm)
  &
  \acont &= (\vv,\ssucc(\lab), \abenv, \acontptr)
  \\
  \acontptr' &= \aalloc_\acont(\methodDef,\atm')
  &
  \aaddr'_i &= \aalloc(\vv_i'',\atm')
  \\
  \aaddr''_j &= \aalloc(\vv_j''',\atm')
  &
  \abenv' &= [\sembr{\tt this} \mapsto \abenv(\vv_0)]
  \\
  \abenv'' &= \abenv'[\vv_i'' \mapsto \aaddr_i', \vv'''_j \mapsto \aaddr_j'']
  &
  \astore' &= \astore \join [\acontptr' \mapsto \set{\acont}, \aaddr'_i \mapsto \aden_i]
  \text.
\end{align*}\end{small}%
\end{gather*}

\paragraph{Object allocation}
\begin{gather*}
\begin{split}
  &(
  \sembr{
    \vv = 
    {\tt new}\; \className\; \ttlp
    \overrightarrow{\vv'}
  \ttrp\ttsc^\lab},
  \abenv,
  \astore,
  \acontptr,
  \atm
  )
  \aTo
  (
  \ssucc(\lab),
  \abenv,
  \astore',
  \acontptr,
  \atm'
  )\text{,}\\
  &\text{where }
\end{split}\\
\begin{small}\begin{align*}
  \atm' &= \atick(\lab,\atm)
  &
  \aden_i &= \astore(\abenv(\vv_i'))
  \\
  (\vec{\fieldName},\aRuctor) &=
   \aFetchRuctor(\className)
  &
  \aaddr_i &= \aalloc(\fieldName_i,\atm')
  \\
  (\Delta \astore, \abenv') &= 
   \aRuctor(\vec{\aaddr}, \vec{\aden})
  &
  \aden' &= (\className, \abenv')
  \\
  \astore' &= \astore \join \Delta \astore \join [\abenv(\vv) \mapsto \aden']
  \text.
\end{align*}\end{small}%
\end{gather*}

\paragraph{Casting}
\begin{gather*}
  (
  \sembr{\vv = \ttlp C'\ttrp\; \vv'},
  \abenv,
  \astore,
  \acontptr,
  \atm
  )
  \aTo
  (
  \ssucc(\lab),
  \abenv,
  \astore',
  \acontptr,
  \atm'
  )
  \\
\begin{small}\begin{align*}
  \atm' &= \atick(\lab, \atm)
  &
  \astore' &= \astore \join [\abenv(\vv) \mapsto \astore(\abenv(\vv'))]
  \text.
\end{align*}\end{small}%
\end{gather*}
\caption{Abstract semantics for A-Normal Featherweight Java.}
\label{fig:abstract-anfw-java}
\end{figure}

\subsection{The $\boldsymbol k$-CFA solution}

As in the original \kCFA{} for CPS, we factored out time-stamp and
address allocation functions and even the structure of time-stamps and
addresses.
The equivalent to call sites in Java are statements.
So, a concrete time-stamp is the sequence of labels traversed since
the program began execution.
Addresses pair either a variable/field name or a method with a time.
Method names are allowed, so that continuations can have a binding
point for each method at each time.
(Were method names not allowed, then all procedures would return to
the same continuations in ``0''CFA.)
\begin{small}\begin{align*}
  \s{Time} &= \syn{Lab}^*                    & \sa{Time} &= \syn{Lab}^k
  \\
  \s{Addr} &= \syn{Offset} \times \s{Time} 
  & \sa{Addr} &= \syn{Offset} \times \sa{Time}
  \\
  \syn{Offset} &= \syn{Var} + \syn{Method}
  \text.
\end{align*}\end{small}%
The time-stamp function prepends the most recent label.
The variable/field-allocation function pairs the variable/field with
the current time, while the continuation-allocation function pairs the
method being invoked with the current time:
\begin{small}\begin{align*}
  \mathit{tick}(\lab,\tm) & = \lab : \tm   \;\;\;\;\;\;\;\;\;\;\;
  & \atick(\lab,\atm) &= \mathit{first}_k(\lab : \atm)
  \\
  \alloc(\vv,\tm) & = (\vv,\tm)             & \aalloc(\vv,\atm) &= (\vv,\atm)
  \\
  \alloc_{\cont}(\methodDef,\tm) &= (\methodDef,\tm) & \aalloc_{\acont}(\methodDef,\atm) &= (\methodDef,\atm)
  \text.
\end{align*}\end{small}%
%
%

\subsection{Computing $\boldsymbol k$-CFA for Featherweight Java}

When we apply the single-threaded store optimization for \kCFA{} over
Java, the state-space appears to be genuinely exponential for $k \geq
1$.
This is because the analysis affords more precision and control over
individual fields than is normally expected of a pointer analysis.
Under \kCFA, the address of every field is the field name paired with
the abstract time from its moment of allocation; the same is true of
every procedure parameter.
However, these fields are still stored within maps, and these maps are
the source of the apparent complexity explosion.

Fortunately, by inspecting the semantics, we see that every address
in the range of a binding environment shares the same time.
Thus, binding environments ($\sa{BEnv}$) may be replaced
directly by the time of allocation with no loss of precision.
In effect, $\sa{BEnv} \cong \sa{Time}$ for object-oriented programs.
Simplifying the semantics under this assumption leads to an abstract
system-space with a polynomial number of bits to (monotonically) flip for a fixed $k$:
\begin{small}\begin{align*}
    & |\syn{Stmt}| \cdot |\sa{Time}|^3 \cdot |\syn{Method}| 
    + |\syn{Method} + \syn{Var}| \cdot |\s{Time}| 
    \\
    \cdot\; & (|\syn{Class}| \cdot |\s{Time}| + |\syn{Var}| \cdot |\syn{Stmt}| \cdot |\s{Time}| \cdot |\syn{Method}| \cdot |\s{Time}|)    
\end{align*}\end{small}

By constructing Shivers's \kCFA{} for Java, and noting the subtle
difference between the semantics' handling of closures and objects, we
have exposed the root cause of the discrepancy in complexity.
In the next section, we profit from this observation by constructing a
semantics in which closures behave like objects, resulting in a
polynomial-time, context-sensitive hierarchy of CFAs for functional
programs.

\subsection{Variations}

The above form of \kCFA{} is not exactly what would be usually called
a \kCFA{} points-to analysis in OO languages. Specifically, OO
\kCFA{}s would typically not change the context for each statement but
only for method invocation statements. An OO \kCFA{} is a
call-site-sensitive points-to analysis: the only context maintained is
call-sites. That is, abstract time would not ``tick'', except in the
method invocation rule of
Figure~\ref{fig:abstract-anfw-java}. Furthermore, the caller's context
would be restored on a method return, instead of just advancing the
abstract time to its next step. (This choice is discussed extensively
in the next sections.) These variations, however, are orthogonal to
our main point: The algorithm is polynomial because of the
simultaneous closing of all fields of an object.

\section{$\boldsymbol m$-CFA: Context-sensitive CFA in PTIME}
\label{sec:mcfa}

\kCFA{} for object-oriented programs is polynomial-time because it
collapses the records inside objects into base addresses.
It is possible to re-engineer the semantics of the $\lambda$-calculus
so that we achieve a similar collapse with the environments inside
closures.
In fact, the re-engineering corresponds to a well-known compiler
optimization technique for functional languages: flat-environment
closures~\cite{dvanhorn:Cardelli1984Compiling,dvanhorn:Appel1991Compiling}.
In flat-environment closures, the values of all free variables are
copied directly into the new environment.
As a result, one needs to keep track of only the base address of the
environment: any free variable is accessed as an offset.

This flat-environment re-engineering leads to the desired
polynomiality, an outcome first noted in the universal framework of
\citet{mattmight:Jagannathan:1995:Unified} (here ``JW'' for brevity).
Some caution must be taken in the use of flat environments; 
if used in conjunction with Shivers's \kCFA{}-style
``last-$k$-call-sites'' contour-allocation strategy, flat environments
achieve weak context-sensitivity in practice
(Section~\ref{sec:related}).
Jagannathan and Weeks
suggest several contour abstractions for control-flow analyses,
including using the last $k$ call sites and the top $m$ frames of the
stack.
Section~\ref{sec:related} argues quantitatively and qualitatively that
the top-$m$-frames approach is the right abstraction for flat environments.
To distinguish this approach from other possible instantiations of the
JW framework, we term the
resulting hierarchy \nCFA{}.
%
Additionally, we note that it is important to specify \nCFA{}
explicitly, as we do below, since its form does not straightforwardly
follow from past results. Specifically,
Jagannathan and Weeks
do specify the abstract
domains necessary for a stack-based ``polynomial \kCFA'' but do not give an
explicit abstract semantics that would produce the results of their
examples. This is significant because simply adapting the
JW concrete semantics to the abstract domains would not produce
\nCFA{} (or any other reasonable static analysis). The analysis cannot
just ``pop'' stack frames when a finite prefix of the call-stack is
kept. For instance, when the current context abstraction consists of
call-sites ($f$, $g$), popping the last call-site will result in a
one-element stack. What our analysis needs to do instead (on a
function return) is \emph{restore} the abstract environment of the
current caller.

\subsection{A concrete semantics with flat closures}

In the new state-space, an environment is a base address:
\begin{small}\begin{align*}
  \state \in \State &=
  \syn{Call} \times 
  \s{Env} \times
  \s{Store}
  \\
  \store \in \s{Store} &= \s{Addr} \parto \s{D}
  \\
  \den \in \Den &= \s{Clo}
  \\
  \clo \in \s{Clo} &= \syn{Lam} \times \s{Env}
  \\
  \addr \in \s{Addr} &= \syn{Var} \times \s{Env}
  \\
  \env \in \s{Env} &\text{ is a set of base environment addresses}
  \text.
\end{align*}\end{small}%
%
%

%
The expression-evaluator $\Eval : \syn{Exp} \times \s{Env} \times
\s{Store} \parto \s{D}$ creates a closure over the current
environment:
\begin{small}\begin{align*}
  \Eval(v,\env,\store) &= \store(v,\env)
  &
  \Eval(\lam,\env,\store) &= (\lam,\env) 
  \text.
\end{align*}\end{small}%
%
%
There is only one transition rule;
when $\call = \sembr{\appform{f}{e_1 \ldots e_n}}$:
\begin{gather*}
  (\call,\env,\store) 
  \To
  (\call',\env'',\store')\text{, where }
  \\
\begin{small}\begin{align*}
  (\lam,\env') &= \Eval(f,\env,\store)
  &
  \den_i &= \Eval(e_i,\env,\store)
  \\
  \lam &= \sembr{\lamform{v_1 \ldots v_n}{\call'}}
  &
  \env'' &= \mathit{new}(\call,\env)
  \\
  \set{x_1,\ldots,x_m} &= \free(\lam)  
  &
  \addr_{v_i} &= (v_i,\env'')
  \\
  \addr_{x_j} &= (x_j,\env'')
  &
  \den'_j &= \store(x_j,\env')
  \\
  \store' &= \store
  [\addr_{v_i} \mapsto \den_i]
  [\addr_{x_j} \mapsto \den'_j]
  \text.
\end{align*}\end{small}\end{gather*}

\subsection{Abstract semantics: $\boldsymbol m$-CFA}
The abstract state-space is similar to the concrete:
\begin{small}\begin{align*}
  \astate \in \aState &=
  \syn{Call} \times 
  \sa{Env} \times
  \sa{Store}
  \\
  \astore \in \sa{Store} &= \sa{Addr} \parto \sa{D}
  \\
  \aden \in \aDen &= \Pow{\sa{Clo}}
  \\
  \aclo \in \sa{Clo} &= \syn{Lam} \times \sa{Env}
  \\
  \aaddr \in \sa{Addr} &= \syn{Var} \times \sa{Env}
  \\
  \aenv \in \sa{Env} &\text{ is a set of base environments addresses}
  \text.
\end{align*}\end{small}%
The abstract evaluator $\aEval : \syn{Exp} \times \sa{Env} \times \sa{Store} \to \sa{D}$ 
also mirrors the concrete semantics:
\begin{small}\begin{align*}
  \aEval(v,\aenv,\astore) &= \astore(\vv,\aenv)
  &
  \aEval(\lam,\aenv,\astore) &= \set{(\lam,\aenv)}
  \text.
\end{align*}\end{small}%
%
%
There is only one transition rule;
when $\call = \sembr{\appform{f}{e_1 \ldots e_n}}$:
\begin{gather*}
  (\call,\aenv,\astore)
  \To
  (\call',\aenv'',\astore')\text{, where }
  \\
\begin{small}\begin{align*}
  (\lam,\aenv') &\in \aEval(f,\aenv,\astore)
  &
  \aden_i &= \aEval(\expr_i,\aenv,\astore)
  \\
  \lam &= \sembr{\lamform{v_1 \ldots v_n}{\call'}}
  &
  \aaddr_{x_j} &= (x_j,\aenv'')
  \\
  \aenv'' &= \widehat{\mathit{new}}(\call,\aenv,\lam,\aenv')
  &
  \aaddr_{v_i} &= (v_i,\aenv'')
  \\
  \set{x_1,\ldots,x_m} &= \free(\lam)
  &
  \aden'_j &= \astore(x_j,\aenv')
  \end{align*}\end{small}\\  
\begin{small}
  \astore' = \astore
  \join
  [\aaddr_{v_i} \mapsto \aden_i]
  \join
  [\aaddr_{x_j} \mapsto \aden'_j]
  \text.
\end{small}%
\end{gather*}

\subsection{Context-sensitivity}
The parameter which must be fixed for \nCFA{} is the new 
environment allocator.
To construct the right kind of context-sensitive analysis, we will
work backward---from the abstract to the concrete.
We would like it to be the case that when a procedure is invoked,
bindings to its parameters are separated from other bindings based on
calling context.
In addition, we need it to be the case that procedures return to the
calling context in which they were invoked.
(Bear in mind that ``returning'' in CPS means calling the 
continuation argument.)
Directly allocating the last $k$ call sites, as in \kCFA, does not
achieve the desired effect, because variables get repeatedly rebound
during the evaluation of a procedure with each invocation of an
internal continuation.
This causes variables from separate invocations to merge once 
they are $k$ calls into in the procedure.
Counterintuitively, we solve this problem by allocating \emph{fewer}
abstract environments.
We want to allocate a new environment when a true procedure is invoked, and
we want to restore an old environment when a continuation is invoked.
As a result, \nCFA{} is sensitive to the top $m$ stack frames, whereas
\kCFA{} is sensitive to the last $k$ calls.\footnote{ Consider a
  program which calls $a$, calls $b$ and then returns from $b$.
  [$k=1$]CFA will consider the context to be the call to $b$, while [$m=1$]CFA
  will consider the context to be the call to $a$.}

In this case, environments will be a function of context, so we have
environments play the role of time-stamps in \kCFA:
\begin{small}\begin{align*}
  \sa{Env} = \sa{Call}^m
  \text,
\end{align*}\end{small}%
\nCFA{} assumes and exploits the well-known partitioning of the CPS
grammar from $\Delta$CFA \cite{dvanhorn:Might:2006:DeltaCFA} which
syntactically distinguishes ordinary procedures from continuations:
\begin{small}\begin{align*}
  \anew(\call,\aenv,\lam,\aenv') &= 
  \begin{cases}
    \mathit{first}_m(\call : \aenv)
    & \lam \text{ is a procedure }
    \\
    \aenv'
    & \lam \text{ is a continuation}
    \text. 
  \end{cases}
\end{align*}\end{small}%
From this it is clear that $[m=0]$CFA and $[k=0]$CFA are actually the
same context-insensitive analysis.

By setting $\s{Env} = \Nats \times \syn{Call}^*$, it is straightforward
to construct a concrete allocator that the abstract allocator simulates:
\begin{small}\begin{align*}
  \new(\call,(n,\vec{\call}),\lam,(n',\vec{\call}')) = 
  \;\;\;\;\;\;\;\;\;\;\;\;\;
  \\
  \begin{cases}
    (n+1,\call : \vec{\call})
    & \lam \text{ is a procedure}
    \\
    (n+1,\vec{\call}')
    & \lam \text{ is a continuation}
    \text.
  \end{cases}
\end{align*}\end{small}

\subsection{Computing $\boldsymbol m$-CFA}

Consider the single-threaded system-space for \nCFA:
\begin{small}\begin{align*}
  \aSState &= \Pow{\syn{Call} \times \sa{Env}} \times \sa{Store}
  \\
  & \cong \left(\syn{Call} \to \Pow{\sa{Env}}\right)
  \times \left(\syn{Addr} \to \Pow{\sa{Clo}}\right)
  \text.
\end{align*}\end{small}%

\begin{theorem}
  Computing \nCFA{} is complete for PTIME.
\end{theorem}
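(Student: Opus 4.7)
The plan is to prove the two directions of PTIME-completeness separately: membership and hardness.

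For \textbf{membership in PTIME}, I would bound the height of the single-threaded system-space lattice $\aSState = \Pow{\syn{Call} \times \sa{Env}} \times \sa{Store}$. Since $\sa{Env} = \syn{Call}^m$ for fixed $m$, we have $|\sa{Env}|$ polynomial in the program size. The address set $\sa{Addr} = \syn{Var} \times \sa{Env}$ is then polynomial, and the range $\Pow{\sa{Clo}} = \Pow{\syn{Lam} \times \sa{Env}}$ contributes polynomially many bits per address (not exponentially, because a flat-closure abstract value is just a lambda together with a single base environment pointer, not a full variable-to-address map as in \kCFA{}). Concretely the height is bounded by
\begin{displaymath}
  |\syn{Call}|\cdot|\syn{Call}|^m \;+\; |\syn{Var}|\cdot|\syn{Call}|^m \cdot |\syn{Lam}|\cdot|\syn{Call}|^m,
\end{displaymath}
which is polynomial for fixed $m$. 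Monotonicity of $\atf$ (already implicit in the abstract semantics, since the store grows by join and the reachable-configuration set grows by union) means that the naive Kleene iteration converges in polynomially many steps. Each step requires inspecting each reachable configuration, performing constantly many store lookups and joins, and doing at most one nondeterministic branch per closure in $\aEval(f,\aenv,\astore)$; all of this is polynomial-time per iteration. Composing these bounds yields a polynomial-time algorithm.

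For \textbf{PTIME-hardness}, I would exploit the remark in the paper that $[m=0]$CFA and $[k=0]$CFA coincide with ordinary 0CFA. Since 0CFA is already known to be PTIME-hard (this is the classical result of Melski and Reps via a reduction from the monotone circuit value problem, and is also obtainable from Van Horn and Mairson's log-space reduction specialized to $k=0$), any PTIME-hardness reduction targeting 0CFA transfers verbatim to $m$-CFA at $m=0$, and a fortiori to the whole $m$-CFA hierarchy. I would cite the known reduction rather than rebuild it, and verify only that the reduction produces programs whose $m$-CFA and $0$CFA verdicts agree on the query of interest (trivially true when $m=0$, since the abstract semantics then ignore the environment component entirely).

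The \textbf{main obstacle} is the upper bound, and specifically making rigorous the claim that flat environments really do avoid the $|\syn{Call}|^{k\cdot|\syn{Var}|}$ blow-up that \kCFA{} suffers. The crux is to argue that every abstract closure and every abstract binding carries \emph{one} environment tag (a base pointer) rather than a per-variable tag, so the encoding of $\sa{BEnv}$ collapses to $\sa{Env}$ exactly as in the Java case where $\sa{BEnv} \cong \sa{Time}$. Once this collapse is justified by inspection of the $\anew$ rule---which chooses a single $\aenv''$ shared by all parameters $v_i$ and all captured free variables $x_j$---the bit-counting argument above goes through directly, and the two halves combine to give PTIME-completeness.
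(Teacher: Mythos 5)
Your proposal is correct and follows essentially the same route as the paper: membership via a monotonic ascent through a lattice whose height is polynomial for fixed $m$ (your additive bound on the height of the product lattice is in fact slightly sharper than the paper's product expression), and hardness via the observation that $[m=0]$CFA coincides with $[k=0]$CFA, whose PTIME-hardness is cited from Van Horn and Mairson. Your extra care in justifying the collapse of environments to a single base pointer is exactly the insight the paper relies on implicitly, so nothing in your argument diverges from theirs in substance.
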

\begin{proof}
  Computing \nCFA{} is a monotonic ascent through a lattice whose
  height is polynomial in program size:
\begin{small}\begin{align*}
  \abs{\syn{Call}} \times \abs{\syn{Call}}^m
  \times
  \abs{\syn{Var}} \times \abs{\syn{Call}}^m
  \times
  \abs{\syn{Lam}} \times \abs{\syn{Call}}^m
  \text.
\end{align*}\end{small}%
Clearly, for any choice of $m \geq 0$, \nCFA{} is computable in
polynomial time.
Hardness follows from the fact that $[m=0]$CFA and $[k=0]$CFA are the
same analysis, which is known to be PTIME-hard
\cite{dvanhorn:VanHorn-Mairson:ICFP07}.
\end{proof}


\section{Comparisons to related analyses}
\label{sec:implementation}
\label{sec:related}

This work draws heavily on the Cousots' abstract
interpretation~\cite{mattmight:Cousot:1977:AI,mattmight:Cousot:1979:Galois} and upon Shivers's 
original formulation of $k$-CFA~\cite{mattmight:Shivers:1991:CFA}.
\nCFA{} (assuming suitable widening) can be viewed as an instance of
the universal framework of \citet{mattmight:Jagannathan:1995:Unified},
but for continuation-passing style.
%
%
%
If one naively uses the framework of
\citet{mattmight:Jagannathan:1995:Unified} with Shivers's \kCFA{}
contour-allocation strategy, the result is a polynomial CFA algorithm
that uses a ``last-$k$-call-sites'' context abstraction, unlike our
\nCFA{}, which uses a top-$m$-frames abstraction.
In the rest of this section, ``naive polynomial $k$-CFA'' refers to a
flat-environment CFA with a last-$k$-call-sites abstraction.

We will argue next, both qualitatively and quantitatively,
why the top-$m$-frame abstraction is better than the last-$k$-call
abstraction for the case of flat-environment CFAs.
%
%
%
The distinction between these policies is subtle yet
important.
Using the last $k$ call sites forces environments within a function's
scope to merge after the $k$th (direct or indirect) call made by a
function.
Any recursive function will appear to make at least $k$ calls during
an analysis, leaving only leaf procedures with boosted
context-sensitivity; since leaf procedures do not invoke higher-order
functions, the extra context-sensitivity offers no benefit to
control-flow analysis.

Consider, for example, the invocation of a simple 
function:
\begin{code}
(identity 3)\end{code}
If the definition of the {\tt identity} function is:
\begin{code}
(define (identity x) x)\end{code}
then both naive polynomial 1CFA and $[m=1]$CFA return the same flow analysis
as $[k=1]$CFA for the program:
\begin{code}
(id 3)
(id 4)\end{code}
That is, all agree the return value is {\tt 4}.
If, however, we add a seemingly innocuous function call to the body of the identity function:
\begin{code}
(define (identity x)
        (do-something)
        x)\end{code} 
then polynomial 1CFA would say that the program returns {\tt 3} or {\tt 4}, 
whereas $[m=1]$CFA and $[k=1]$CFA still agree that the return value is just {\tt 4}.

To understand why naive polynomial 1CFA degenerates into the behavior of 0CFA with
the addition of the function call to {\tt do-something}, consider
what the last $k=1$ call sites are at the return point {\tt x}.
Without the intervening call to {\tt (do-something)}, the last call
site at this point was {\tt (id 3)} in the first case, and {\tt
  (id 4)} in the second case.
Thus, polynomial 1CFA keeps the bindings to {\tt x} distinct.
\emph{With} the intervening call to {\tt (do-something)}, the last 
call site becomes {\tt (do-something)} in both cases, causing
the flow sets for {\tt x} to merge together.
If, however, we allocate the top $m$ stack frames for the
environment, then the intervening call to {\tt do-something} has no
effect, because the top of the stack at the return point {\tt x} is
still the call to {\tt (id 3)} or {\tt (id 4)}, which keeps the
bindings distinct.



Several papers have investigated polyvariant flow analyses with
polynomial complexity bounds in the setting of type-based analysis, as
compared with the abstract interpretation approach employed in this
paper.
\citet{dvanhorn:Mossin:97:FlowAnalysis} presents a flow analysis based
on polymorphic subtyping including polymorphic recursion for a
simply-typed (i.e. monomorphically typed) $\lambda$-calculus.
Mossin's algorithm operates in $O(n^8)$-time and both
\citet{dvanhorn:Rehof:POPL01} and \citet{dvanhorn:Gustavsson:PADO01}
developed alternative algorithms that operate in $O(n^3)$, where $n$
is the size of the explicitly typed program (and in the worst case,
types may be exponentially larger than the programs they annotate).
\nCFA{} does not impose typability assumptions and is polynomial in
the program size without type annotations.
As a consequence of the abstract interpretation approach taken in
\nCFA{}, unreachable parts of the program are never analyzed,
in contrast to most type based approaches.
Another difference concerns the space of abstract values: \nCFA{}
includes closure approximations, while polymorphic recursive flow
types relate program text and do not predict run-time environment
structure.

\subsection{Benchmark-driven comparisons}

We have implemented $k$-CFA, \nCFA{} and polynomial $k$-CFA for R5RS
Scheme (with support for some of R6RS).
Making a fair comparison of unrelated CFAs (e.g., \nCFA{} and
polynomial $k$-CFA) is not straightforward.
CFAs are not totally ordered by either speed or precision for all
programs.
In fact, even within the same program, two CFAs may each be locally
more precise at different points in the program.
That is, given the output of two CFAs, it might not always be possible
to say one is more precise than another.
To compare CFAs on an ``apples-to-apples'' basis requires careful
benchmark construction; we discuss the results on such benchmarks
below.

\subsubsection{Comparing speed with precision held constant}
The constructive content of Van Horn and Mairson's proof offers a way
to generate benchmarks that exercise the worst-case behavior of a
CFA---by constructing a program that forces the CFA to the top of the
lattice (because the most precise possible answer is the top).
Using this insight, we constructed a series of successively larger
``worst-case'' benchmarks and recorded how long it took each CFA to
reach the top of the lattice on a 2 Core, 2 GHz OS X machine:
\begin{center}
{
\begin{tabular}{|c|c|c|c|c|}
\hline
\textsf{Terms} & \textsf{$k=1$} & \textsf{$m=1$} & \textsf{poly.,$k$=1} & \textsf{$k$=0}
\\
\hline
\hline
69 & $\epsilon$ &  $\epsilon$ & $\epsilon$ &  $\epsilon$
\\
\hline
123 & $\epsilon$ & $\epsilon$ & $\epsilon$ & $\epsilon$
\\
\hline
231 & 46 s & $\epsilon$ & 2 s & $\epsilon$
\\
\hline
447 & $\infty$ & 3 s & 5 s & 2 s
\\
\hline
879 & $\infty$ & 48 s & 1 m 8 s & 15 s
\\
\hline
1743 & $\infty$ & 51 m & $\infty$ &  3 m 48 s
\\
\hline
\end{tabular}
}
\end{center}
$\epsilon$ indicates that the analysis returned in less than one
second; $\infty$ indicates the analysis took longer than one hour.

As can be seen, \nCFA{} is not just faster than \kCFA{}
but also consistently faster than naive polynomial \kCFA{}. The difference
in scalability between \nCFA{} and \kCFA{} is large and matches
the theoretical expectations well.
\emph{From these numbers we can infer that, in the worst case, the
  feasible range of context-sensitive analysis of functional programs
  has been increased by two-to-three orders of magnitude.}

\subsection{Comparing speed and precision}

On the following benchmarks, we measured both the run-time of the
analyses and the number of inlinings supported by the results.
We are using the number of inlinings supported as a crude but
immediately practical metric of the precision of the analysis.
\begin{center}
{
\small
\begin{tabular}{|c|cc|cc|cc|cc|}
\hline
\textsf{Prog}/&
  \multicolumn{2}{|c|}{\multirow{2}{*}{\textsf{$k=1$}}} & 
  \multicolumn{2}{|c|}{\multirow{2}{*}{\textsf{$m=1$}}} &
  \multicolumn{2}{|c|}{\multirow{2}{*}{\textsf{poly.,$k$=1}}} &
  \multicolumn{2}{|c|}{\multirow{2}{*}{\textsf{$k$=0}}} \\
\textsf{Terms} & & & & & & & &
\\
\hline
\hline
\textsf{eta}
  & \multirow{2}{*}{$\epsilon$} & \multirow{2}{*}{7} 
  & \multirow{2}{*}{$\epsilon$} & \multirow{2}{*}{7}
  & \multirow{2}{*}{$\epsilon$} & \multirow{2}{*}{3}
  & \multirow{2}{*}{$\epsilon$} & \multirow{2}{*}{3}
\\
\textsf{49} & & & & & & & & \\
\hline
\textsf{map}
  & \multirow{2}{*}{$\epsilon$} & \multirow{2}{*}{8}
  & \multirow{2}{*}{$\epsilon$} & \multirow{2}{*}{8}
  & \multirow{2}{*}{$\epsilon$} & \multirow{2}{*}{8}
  & \multirow{2}{*}{$\epsilon$} & \multirow{2}{*}{6}
\\
\textsf{157} & & & & & & & & \\
\hline
\textsf{sat}
  & \multirow{2}{*}{$\infty$}   & \multirow{2}{*}{-}
  & \multirow{2}{*}{$\epsilon$} &  \multirow{2}{*}{12} 
  & \multirow{2}{*}{1s}         &  \multirow{2}{*}{12}  
  & \multirow{2}{*}{$\epsilon$} & \multirow{2}{*}{12} 
\\
\textsf{223} & & & & & & & & \\
\hline
\textsf{regex}
  & \multirow{2}{*}{4s}  &  \multirow{2}{*}{25}
  & \multirow{2}{*}{3s}  &  \multirow{2}{*}{25}
  & \multirow{2}{*}{14s} &  \multirow{2}{*}{25}
  & \multirow{2}{*}{2s}  &  \multirow{2}{*}{25} 
\\
\textsf{1015} & & & & & & & & \\
\hline
\textsf{scm2java}
  & \multirow{2}{*}{5s} &  \multirow{2}{*}{86}
  & \multirow{2}{*}{3s} &  \multirow{2}{*}{86} 
  & \multirow{2}{*}{3s} &  \multirow{2}{*}{79}
  & \multirow{2}{*}{4s} &  \multirow{2}{*}{79}
\\
\textsf{2318} & & & & & & & & \\
\hline
\textsf{interp}
  & \multirow{2}{*}{5s} &  \multirow{2}{*}{123}
  & \multirow{2}{*}{4s} &  \multirow{2}{*}{123}
  & \multirow{2}{*}{9s} &  \multirow{2}{*}{123}
  & \multirow{2}{*}{5s} &  \multirow{2}{*}{123}
\\
\textsf{4289} & & & & & & & & \\
\hline
\textsf{scm2c}
  & \multirow{2}{*}{179s} &  \multirow{2}{*}{136}
  & \multirow{2}{*}{143s} &  \multirow{2}{*}{136}
  & \multirow{2}{*}{157s} &  \multirow{2}{*}{131}
  & \multirow{2}{*}{55s} &  \multirow{2}{*}{131}
\\
\textsf{6219} & & & & & & & & \\
\hline
\end{tabular}
}
\end{center}
The first two benchmarks test common functional idioms;
\textsf{sat} is a back-tracking SAT-solver;
\textsf{regex} is a regular expression matcher based on derivatives;
\textsf{scm2java} is a Scheme compiler that targets Java;
\textsf{interp} is a meta-circular Scheme interpreter;
\textsf{scm2c} is a Scheme compiler that targets C.

\emph{From these experiments, \nCFA{} appears to be as precise as \kCFA{} in
practice, but at a fraction of the cost. Compared to naive polynomial 1CFA, 
$[m=1]$CFA is always equally fast or faster and equally or
more precise.
%
%
These experiments also suggest that naive polynomial 1CFA is little better
than 0CFA in practice, and, in fact, it even incurs a higher running
time than \kCFA{} in some cases.}


\section{Conclusion}
\label{sec:conclusion}

Our investigation began with the $k$-CFA paradox: the apparent
contradiction between (1) Van Horn and Mairson's proof that $k$-CFA is
EXPTIME-complete for functional languages and (2) the existence of
provably polynomial-time implementations of $k$-CFA for
object-oriented languages.
We resolved the paradox by showing that the \emph{same} abstraction
manifests itself differently for functional and object-oriented
languages.
To do so, we faithfully reconstructed Shivers's $k$-CFA for
Featherweight Java, and then found that the mechanism used to
represent closures is degenerate for the semantics of Java.
This degeneracy is what causes the collapse into polynomial time.

With respect to standard practice in \kCFA{}, the bindings inside
closures may be introduced over time in several contexts, whereas the
fields inside an object are all allocated in the same context.
This allows objects to be represented as a class name plus the initial
context, whereas the environments inside closures must be a true map from
variables to binding contexts; this map causes the exponential blow-up
in complexity for functional $k$-CFA.
Armed with this insight, we constructed a concrete semantics for the
$\lambda$-calculus which uses flat environments---environments in
which free variables are accessed as offsets from a base pointer,
rather than through a chain of environments.
In fact, this environment policy corresponds to well-known
implementation techniques from the field of functional program
compilation.

Under abstraction, flat environments exhibit the same degeneracy as
objects, and the end result is a polynomial hierarchy of
context-sensitive control-flow analyses for functional languages.
Our empirical investigation found that coupling flat environments with
a last-$k$-call-sites policy for context-allocation offers negligible
benefits for precision compared with 0CFA.
To solve this problem, we constructed a polynomial CFA hierarchy which
allocates the top $m$ stack frames as its context: \nCFA{}.
According to our empirical evaluation, \nCFA{} matches $k$-CFA in
precision, but with faster performance.

\section{Future work}
\label{sec:future}

Our intent with this work was to build a bridge.
Now built, that bridge spans the long-separated worlds of
functional and object-oriented program analysis.
Having already profited from the first round-trip voyage, it is worth asking what else
may cross.

We believe that abstract garbage collection is a good candidate~\cite{mattmight:Might:2006:GammaCFA}.
At the moment, it has only been formulated for the functional world.
The abstract semantics for Featherweight Java make it possible to
adapt abstract garbage collection to the static analysis of
object-oriented programs.
We hypothesize that its benefits for speed and precision will carry
over.

Going in the other direction, the field of points-to analysis for
object-oriented languages has significant maturity and has developed a
more practical understanding for what parameters (e.g., context depth)
and approximations (e.g., maintaining different contexts for variables
vs. closures) tend to yield fruitful precision for client analyses.
There is a more intense emphasis on implementation (e.g., using binary
decision diagrams) and on evaluation, which should
be possible to translate to the functional setting.
Also, what the object-oriented community calls
\emph{shape analysis} appears to go by \emph{environment analysis} in
the functional community.
Peering across from the functional side of the bridge, shape analyses
seem far ahead of environment analyses in their sophistication.
We hypothesize that these shape-analytic techniques will be profitable
for environment analysis.




\paragraph{Acknowledgments:} %
We are grateful to Jan Midtgaard for comments and relevant references
to the literature. We thank Ond\v{r}ej Lhot\'{a}k for valuable
discussions. This work was funded by the National Science Foundation
under grant 0937060 to the Computing Research Association for the
CIFellow Project, which supports David Van Horn, as well as grants
CCF-0917774 and CCF-0934631.

\bibliographystyle{plainnat}
\bibliography{bibliography,ptranalysis}

\end{document}